\newcommand{\hW}{\widehat{W}}
\newcommand{\hA}{\widehat{A}}
\newcommand{\hT}{\widehat{T}}
\newcommand{\hcP}{\widehat{\mathcal{P}}}
\newcommand{\hEE}{\widehat{\mathbb{E}}}
\newcommand{\cM}{\mathcal{M}}
\newcommand{\Nr}{\mathcal{N}}
\newcommand{\corr}{\mathrm{corr}}
\renewcommand{\Cov}{\mathrm{Cov}}
\newcommand{\WFcomment}[1]{{\color{red}{(WF:  #1) }}}
\newcommand{\DTcomment}[1]{{\color{blue}{(DT:  #1) }}}
\renewcommand{\WFcomment}[1]{}
\renewcommand{\DTcomment}[1]{}
\newcommand{\op}{\mathrm{op}}
\title{Family Learning: Non-Parametric Statistical Inference with Parametric Efficiency}
\author{William Fithian\thanks{Much of the work was done while a contractor at Facebook.}}
\affil{\texttt{wfithian@berkeley.edu}\\
Department of Statistics, University of California, Berkeley}
\author{Daniel Ting\thanks{Work was done while employed at Facebook.}}
\affil{\texttt{dting@tableau.com}\\
Tableau Software}
\begin{document}

\newtheorem{theorem}{Theorem}
\newtheorem{corollary}[theorem]{Corollary}
\newtheorem{lemma}[theorem]{Lemma}
\newtheorem{observation}[theorem]{Observation}
\newtheorem{proposition}[theorem]{Proposition}
\newtheorem{definition}[theorem]{Definition}
\newtheorem{claim}[theorem]{Claim}
\newtheorem{fact}[theorem]{Fact}
\newtheorem{assumption}{Assumption}

\newcommand{\bX}{\mathbf{X}}
\newcommand{\bS}{\mathbf{S}}
\newcommand{\bD}{\mathbf{D}}
\newcommand{\bT}{\mathbf{T}}
\newcommand{\bU}{\mathbf{U}}
\newcommand{\bV}{\mathbf{V}}

\newcommand{\cX}{\mathcal{X}}
\newcommand{\cH}{\mathcal{H}}
\newcommand{\convL}{\ensuremath{\stackrel{L_2}{\to}}}

\maketitle

\begin{abstract}
Hypothesis testing and other statistical inference procedures are most efficient when a reliable low-dimensional parametric family can be specified. We propose a method that learns such a family when one exists but its form is not known {\em a priori}, by examining samples from related populations and fitting a low-dimensional exponential family that approximates all the samples as well as possible. We propose a computationally efficient spectral method that allows us to carry out hypothesis tests that are valid whether or not the fit is good, and recover asymptotically optimal power if it is. Our method is computationally efficient and can produce substantial power gains in simulation and real-world A/B testing data.
\end{abstract}

\section{Introduction}\label{sec:intro}

In most statistical problems, parametric modeling assumptions bring many benefits: they simplify the problem, guide us in deriving efficient inference procedures, and help us to summarize and understand the data parsimoniously. These benefits, however, must be weighed against the risk that if we choose a badly misspecified model, the resulting inferences may be highly misleading.

This article proposes a method to {\em learn} a good parametric model from the data when we face a repeated task analyzing many statistical experiments with similar underlying characteristics. Our motivating insight is that we can treat similar experiments as arising from a common parametric model, each one corresponding to a different parameter value. If a smooth parametric model exists, our method successfully estimates and exploits it to derive an optimal hypothesis testing procedure for the next experiment, asymptotically achieving the same efficiency as if the model were known {\em a priori}. In addition, these tests retain their advertised signficance level under nonparametric assumptions.

\subsection{A/B Testing}\label{sec:AB}

The original motivation for this work is {\em A/B testing}: randomized experiments conducted at internet companies to evaluate user behavior under various treatment conditions, typically minor modifications of the website's behavior. Most large internet companies carry out many A/B tests each day as a way to test new features or check that routine updates have not broken the site.  Companies also want to understand how people are engaging with their products so that they can make informed and effective decisions about how to improve upon them. They do so by examining metrics about user engagement such as the time spent on the site or number of photos liked. The repeated nature of this task presents an opportunity for the company to learn from past experiments in order to improve their inferences for the next one.

Despite a common misconception that statistical power is always extremely high when analyzing internet-scale data sets, statistical efficiency is in fact at a high premium in A/B testing for several reasons. First, many of the user metrics that firms use for evaluation have characteristically heavy right tails, making nonparametric estimators like the sample mean highly inefficient. Second, the changes observed can be small, but many small incremental changes can have a cumulatively large effect. For example, detecting an additional 200 changes that each raise user engagement by 0.2\% would increase engagement by 50\% overall. Third, it is important that the methods are powerful enough to withstand slicing the data into subgroups for more detailed analyses, or correcting for multiple comparisons across many experiments. Finally, companies prefer to involve as few users as possible in each experiment. 

\vspace{-0.2cm}
\subsection{Notation and Problem Setting}\label{sec:notation}

We consider an A/B testing problem with $m$ independent treatment groups, obtained by combining many experiments with several treatment groups each, all drawn from a common user population and measuring the same outcome of interest. The $i$th treatment group is $\smash{X_{i,1},\ldots, X_{i,n_i} \simiid P_i}$, with $n_+= \sum_i n_i$ denoting the total number of observations.

We introduce a further working assumption that the distributions $P_i$ arise from a common smooth parametric family of densities $\cP = \{p_{\theta}:\; \theta \in \Omega \sub \R^k\}$ with respect to a common dominating measure $\nu$ on the outcome space $\cX$; typically $\cX\sub \R$ and $\nu$ is the usual Lebesgue measure. We emphasize that the functional form of $p_\theta$ is unknown. Let $\E_\theta[\cdot], \Var_\theta[\cdot], \corr_\theta(\cdot,\cdot)$ etc. denote moments computed with respect to $p_\theta$, and let $\theta_i$ denote the parameter for group $i$ ($dP_i = p_{\theta_i}d\nu$).

We consider a local asymptotic regime in which the sample sizes grow while the effect sizes shrink, a good match for Facebook data. That is, we assume $\Omega$ contains an open neighborhood of 0 and $\theta_i\to 0$. Locally, it will be sufficient to estimate a local approximation $p_\theta \approx p_0(1 + \theta'T)$, where $T(x)$ is the efficient score function at $\theta=0$. We assume the  regularity condition that $\cP$ is {\em differentiable in quadratic mean} (DQM) at $\theta=0$ and that the Fisher information $\E_0 TT'$ is non-singular. 
Under standard regularity conditions the score is more familiarly the gradient of the log-likelihood $T(x) = \nabla_\theta \left.\log p_{\theta}(x)\right|_{\theta=0}$. DQM is a weaker condition not requiring differentiability of $\log p_\theta$ at $0$. While $\E_0 T(X) = 0$ by necessity, $T(x)$ is only identifiable up to invertible linear transformations. To resolve the ambiguity, we assume without loss of generality that $\Var_0 T(X) = I_k$ (even then, $T(X)$ is only identifiable up to orthogonal transformations).

Given estimates $\hT$ and $\hat{p}_0$, we approximate $\cP$ using the exponential family model $\hcP = \{e^{\theta'\hT(x) - \psi(\theta)}\hat{p}_0(x):\; \int_\cX e^{\theta'\hT}\hat{p}_0 < \infty\}$. As we will see, applying the estimated model $\smash{\hcP}$ to future experiments gives nearly the same statistical efficiency as knowing the actual parametric model. Because we are  estimating an exponential family approximation, we will refer to $\smash{\hT(x)}$ as the {\em sufficient statistic} and $\smash{\hat{p}_0(x)}$ as the {\em base measure}, reflecting the roles they play in $\smash{\hcP}$.

As a concrete example, consider the Laplace location family, with $p_{\theta}(x) = \smash{\frac{1}{2}} e^{-|x-\theta|}$. This is not an exponential family, but $\smash{p_{\theta}(x) \approx p_0(x) e^{\theta \sgn(x)}}$ for small values of $\theta$. Therefore $\cP$ is locally well-approximated by an exponential family with sufficient statistic $T(x) = \sgn(x)$. Figure~\ref{fig:laplace} shows the results of a simulation (revisited in Section~\ref{sec:laplace}) that illustrates our method. By estimating $T(x)$ and $p_0(x)$, we obtain a one-parameter exponential family that we can subsequently use to analyze experiments as though we had prior knowledge of the true parametric form.

\begin{figure}
  \centering
  \begin{tabular}{ccc}
    \includegraphics[width=.28\textwidth, height=.28\textwidth]{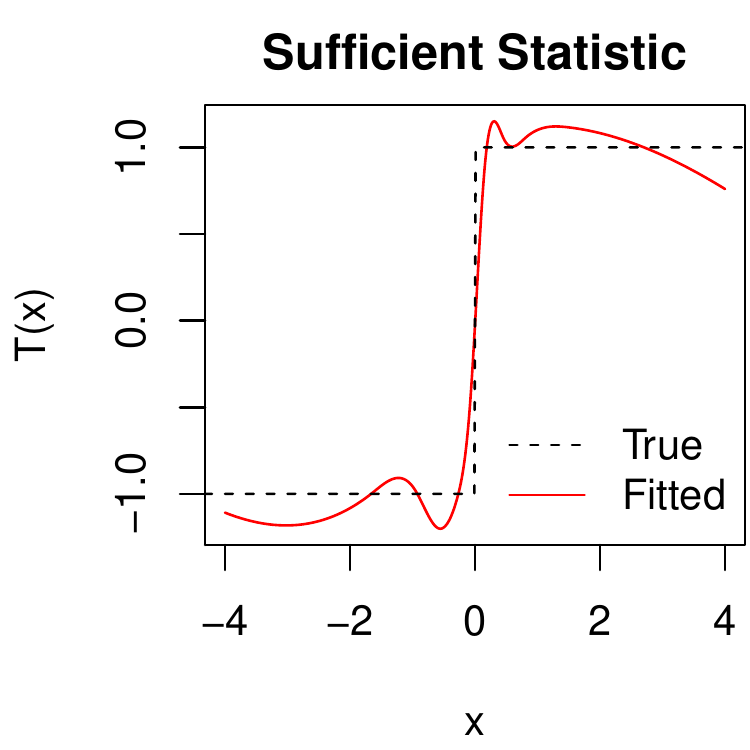} 
    & 
      \includegraphics[width=.28\textwidth, height=.28\textwidth]{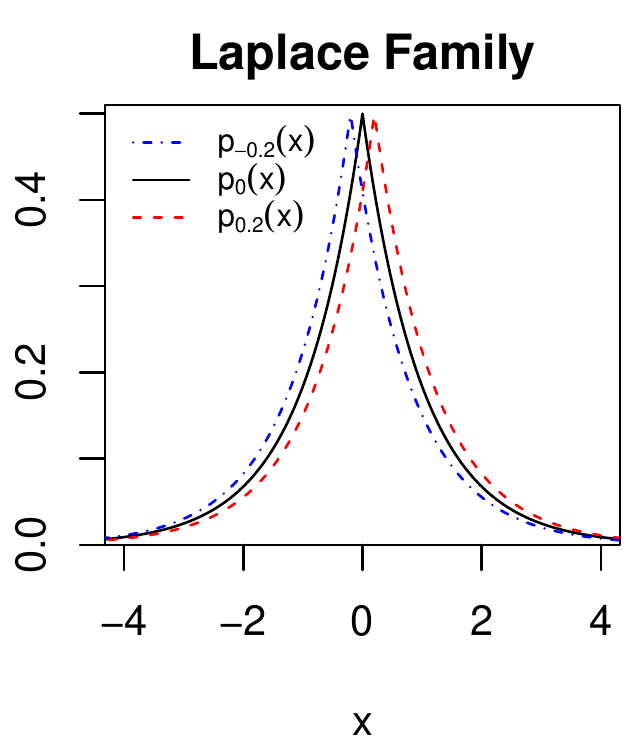}		
    & 
      \includegraphics[width=.28\textwidth, height=.28\textwidth]{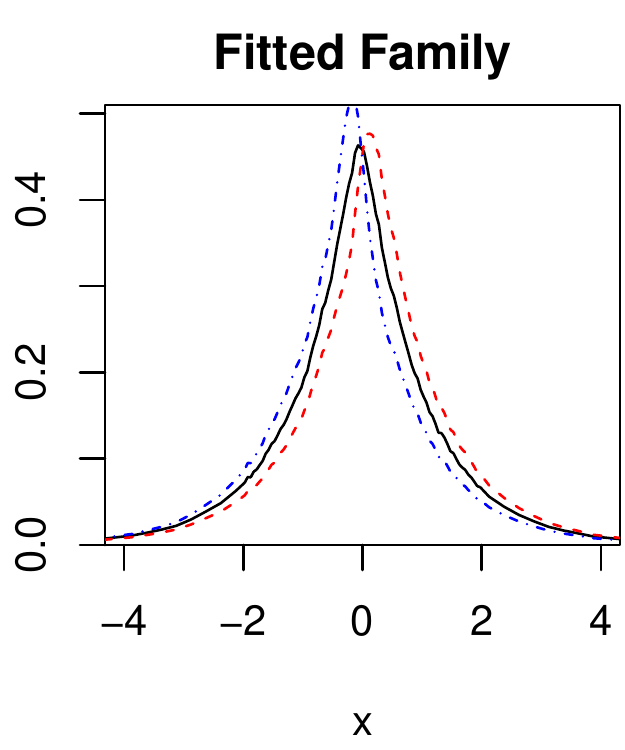}		
  \end{tabular}
  \vspace{-.3cm}
  \caption{Left: Fitted sufficient statistic for Laplace simulation. Middle: Laplace density for $\theta = -0.2, 0,$ and $0.2$. Right: Fitted densities for the same parameter values.
  }
  \label{fig:laplace}
\end{figure}

Section~\ref{sec:est} details our method for estimation of $\cP$ and proves consistency. Section~\ref{sec:inference} discusses how to perform inference once we have our estimate $\smash{\hcP}$, and Sections~\ref{sec:simulations}--\ref{sec:facebook} illustrate and evaluate our method using simulated and real data from A/B tests performed at Facebook. Section~\ref{sec:discussion} concludes.

\subsection{DQM Families and Efficient Hypothesis Testing}\label{sec:dqm}
While there are many possible uses for the estimated family $\hcP$, we focus on hypothesis testing in a new experiment, i.e. testing $H_0:\; P_{m+1} = P_{m+2}$. In particular, for any candidate transformation $U(x)$ with variance $\sigma_U^2<\infty$ we could use the test statistic
\begin{align} \label{eq:test statistic}
\Delta_U = \sqrt{\frac{n_{m+1}n_{m+2}}{(n_{m+1}+n_{m+2})\hat\sigma_U^2}}
\left[\frac{1}{n_{m+1}}\sum_{j=1}^{n_{m+1}}U(X_{m+1,j}) - \frac{1}{n_{m+2}}\sum_{j=1}^{n_{m+2}}U(X_{m+2,j})\right],
\end{align}
where $\hat\sigma_U^2$ is the pooled sample variance. We can find the critical value either by permutation (giving an exact test in finite samples) or by using the more computationally convenient normal approximation $\Delta_U \stackrel{H_0}{\to} \Nr(0,1)$, which gives the right significance level in large samples. 
These procedures are fully non-parametric in the sense that they yield an asymptotically correct critical value without making any parametric assumptions on the underlying distribution.

We now briefly review the asymptotic theory of hypothesis testing in the case $k=1$; for a full treatment see~\citet{van2000asymptotic}. The differentiable in quadratic mean (DQM) condition is that the square root of the density can be approximated as $\sqrt{p_\theta(x)} = \sqrt{p_0(x)}\left(1+\theta'T(x)/2\right) + r_\theta(x)$, where the remainder term satisfies $\| r_\theta \|_{L_2(\nu)}^2 = \int_\cX r_\theta(x)^2\,d\nu(x)= o(\|\theta\|^2)$. 
The DQM assumption is a weak and easily satisfied condition and includes ill-behaved distributions such as power law distributions. 
If the experiment's samples are from a DQM family $\cP$ with score $T(X)$ at 0, then the (one-sided) test based on $\Delta_T$ is asymptotically equivalent to the uniformly most powerful likelihood ratio test. We analyze power in the local asymptotic regime, where the parameters for the two samples becomes closer as the  sample size grows, $\theta_{m+1} = t_{m+1}/\sqrt{n_{m+1}}$ and $\theta_{m+2} = t_{m+2}/\sqrt{n_{m+2}}$; this scaling ensures that the test is neither powerless nor perfectly powerful in the limit as $n_{m+1},n_{m+2}\to\infty$. Under this scaling, Le Cam's Third Lemma \WFcomment{original reference?} implies that for almost any candidate $U$, we have 
\begin{equation}\label{eq:varDelta}
\Delta_U \to \Nr\left(\frac{t_{m+1}-t_{m+2}}{2}\corr_0(U,T), 1\right).
\end{equation}
We make this relationship more explicit in section \ref{sec:inference}.
Note that if $\corr_0(U,T) = 1/2$, we would need to quadruple the size of our sample to achieve the same power using the test statistic $\Delta_U$ that we could have achieved by using $\Delta_T$ instead. In that sense, the test based on $\Delta_U$ uses the data only $25\%$ as efficiently as the test based on $\Delta_T$; we say its {\em Pitman relative efficiency} is 0.25 relative to the better test \WFcomment{original reference?}, meaning that testing with $\Delta_U$ is asymptotically equivalent to first discarding $75\%$ of the data and then testing with $\Delta_T$. Indeed, Le Cam's Third Lemma implies that the Pitman relative efficiency of a test based on {\em any} asymptotically normal test statistic can be measured by its squared correlation to $\Delta_T$ under $p_0$, and
the score $T$ defines the asymptotically efficient test statistic.

\WFcomment{Something like ``There is a comparable theory for estimators and confidence intervals, see \citet{van2000asymptotic} for a full treatment.''}

\vspace{-.1cm}
\subsection{Related work}
\citet{hastie1987principal} propose a method to decompose contingency tables into low-dimensional multinomial families. Their goal closely resembles ours in spirit in the case of discrete data with a saturated basis $S$. To our knowledge, our work is the first to propose using a meta-analysis of low powered experiments in order to obtain procedures that are asymptotically efficient, and in particular, to do so non-parametrically. 
There is a rich body of work on non-parametric two-sample tests in a single experiment.
Some tests, such as the t, Mann-Whitney, or Wilcoxon tests, are not guaranteed to have power even in the limit of infinite data, while others such as the Kolmogorov-Smirnov and Anderson-Darling tests, 
the Kernel Two-Sample Test of \cite{gretton2012test}, and the B-test of \cite{zaremba2013btest} do. For finite samples 
 choosing the appropriate test can be difficult as it is not clear which test is most powerful for a given problem. 

Unlike most non-parametric hypothesis tests, our method also allows for inference about functionals $\phi(F)$ of a distribution. These include the mean, median, or other interpretable quantities of interest. Semi-parametric methods such as those by \cite{vanDerLaan2006tmle} and \cite{klaassen1987consistent} can be used when there is 
target functional or parameter of interest without resorting to strong parametric assumptions. Surprisingly, they can give asymptotically efficient estimators. 
However, the efficiency depends on eventually having a high signal-to-noise ratio, an unrealistic assumption
when the goal is to improve inference for experiments with low power. We instead combine many experiments with low signal to improve efficiency. 

\cite{fithian2015heavytailed} and \cite{taddy2016heavytailed} propose methods 
to improve mean estimates in A/B testing by using a large background dataset to better estimate the tails for heavy tailed distributions.

\vspace{-.2cm}
\section{Estimating the Family $\cP$}\label{sec:est}

Our method for estimating $\cP$ proceeds in two steps. First, we employ a spectral method to estimate $T(x)$, which requires no knowledge or estimate of $p_0(x)$. Second, we use the fitted $\smash{\hT(x)}$ to estimate $p_0(x)$ using a Poisson generalized linear model (GLM). For simplicity we focus our analysis on the one-parameter case; $k>1$ is similar but with more notational overhead.
\vspace{-.2cm}
\subsection{Estimating the Sufficient Statistic $T(x)$}\label{sec:estT}

Let $(S_i(x))_{i=1}^\infty$ denote a dense basis of $L^2(p_0)$, specified in advance and with each $S_i$ bounded. For a fixed dimension $d$, we approximate $T$ as a linear combination of the first $d$ basis vectors $S(x) = (S_1(x), \ldots, S_d(x))$. Truncating the basis to $d$ elements incurs an approximation error shrinking to 0 as $d\to\infty$.

Let $W = \Var_0 S(X)\in\R^{d\times d}$. Define the group mean $\bar S_i  = \frac{1}{n_i} \sum_{j=1}^{n_i} S(X_{i,j})$, the grand mean $\bar S = \frac{1}{n_+} \sum_{i=1}^m n_i \bar S_i$, and the empirical between- and within-treatment-group covariance matrices:
\begin{align}
\hA &= \frac{1}{m-1}\sum_{i=1}^m n_i(\bar S_i - \bar S)(\bar S_i - \bar S)', \quad \text{ and}\\
\hW &= \frac{1}{n_+-m} \sum_{i=1}^m \sum_{j=1}^{n_i} (S(X_{i,j}) - \bar S_i) (S(X_{i,j}) - \bar S_i)'.
\end{align}
We estimate $\hT(x)=\hat\beta'S(x)$, where $\hat\beta\in\R^d$ maximizes the Rayleigh quotient $\smash{\beta'\hA\beta / \beta'\hW\beta}$.

We now state our main theorem: under appropriate assumptions, $\smash{\hT}$ converges to $T$ in $L^2(p_0)$, up to translation and scaling. Consider a sequence of problems satisfying the following:

\begin{assumption}\label{assu:degrees}
$m, n_+-m \to \infty$ (the total degrees of freedom for $\hW$ and $\hA$ are growing)
\end{assumption}

\begin{assumption}\label{assu:signal}
$\frac{1}{m}\sum_{i=1}^m n_i(\theta_i-\bar\theta)^2 = \Omega(1)$ (the average signal strength does not shrink to 0)
\end{assumption}

\begin{assumption}\label{assu:smallEffects}
$\max_i |\theta_i| \to 0$ (effect sizes are small)
\end{assumption}

Assumptions~\ref{assu:degrees}--\ref{assu:smallEffects} are fairly mild conditions. To grasp them more easily, consider a balanced random effects model where $n_i = n$, and $\theta_i = t_i/\sqrt{n}$ where $t_i\simiid \Nr(\alpha, \sigma^2)$ with $\alpha,\sigma^2$ fixed. Then, Assumptions~\ref{assu:degrees}--\ref{assu:smallEffects} hold almost surely provided that $m,n\to\infty$ with $\log m = o(n)$.

\begin{theorem}\label{thm:consistencyT}
  Under Assumptions~\ref{assu:degrees}--\ref{assu:smallEffects}, if we take $d \to\infty$ slowly enough, then $\hT(x)$ is consistent for $T(x)$ in $L^2(p_0)$ up to sign; that is, 
  $\corr_0^2(\hT(X), T(X) | \hT) \toProb 1$.
\end{theorem}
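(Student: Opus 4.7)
The plan is to view $\hat\beta$ as the top generalized eigenvector of the pencil $(\hA, \hW)$ and show its population limit recovers the coefficient vector of the $L^2(p_0)$-projection of $T$ onto $\mathrm{span}(S_1,\ldots,S_d)$. A Davis--Kahan style perturbation bound then gives $\hat\beta \propto W^{-1}\Cov_0(S,T) + o_p(1)$ (up to sign), and density of the basis as $d\to\infty$ closes the argument by making this projection converge to $T$ itself in $L^2(p_0)$.

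First I would derive the population-level targets. Using DQM and Assumption~\ref{assu:smallEffects}, one obtains the first-order mean expansion $\E_{\theta_i} S(X) - \E_0 S(X) = \theta_i c + o(\theta_i)$ where $c = \Cov_0(S,T)$, uniformly in $i$, using boundedness of $S$. Writing $\bar S_i - \bar S = (\theta_i - \bar\theta)c + \epsilon_i + o_p(\theta_i)$ with $\epsilon_i$ the within-group sampling noise of conditional covariance $\approx W/n_i$, I would expand $\hA$ into: (i) a signal term $\tau_m^2\, cc'$ with $\tau_m^2 = \frac{1}{m-1}\sum_i n_i(\theta_i - \bar\theta)^2 = \Omega(1)$ by Assumption~\ref{assu:signal}; (ii) a noise term $\frac{1}{m-1}\sum_i n_i\, \epsilon_i \epsilon_i'$ whose conditional expectation is $\approx W$ and which concentrates because $m\to\infty$; and (iii) cross terms with vanishing conditional means and variance $O(1/m)$. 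A parallel argument using $n_+ - m \to \infty$ shows $\|\hW - W\|_{\op} = o_p(1)$, since Assumption~\ref{assu:smallEffects} makes the within-group covariance under each $P_i$ close to $W$.

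The limiting generalized eigenproblem $(W + \tau_m^2 cc')\beta = \lambda W\beta$ has top eigenvalue $1 + \tau_m^2\, c'W^{-1}c$ with eigenvector $\beta^\star \propto W^{-1}c$, all other eigenvalues equal to $1$. Since $\tau_m^2 = \Omega(1)$ and $c'W^{-1}c \to \Var_0 T = 1$ as $d\to\infty$, the spectral gap is bounded away from zero, and a Davis--Kahan bound (after symmetrizing via $\gamma = W^{1/2}\beta$) gives $\hat\beta \propto \beta^\star + o_p(1)$ once the operator-norm perturbations of $\hA$ and $\hW$ are $o_p(1)$. A direct calculation yields $\corr_0^2(\hT, T \mid \hat\beta) = (\hat\beta' c)^2/(\hat\beta' W\hat\beta)$, which thus converges in probability to $c'W^{-1}c$, and this tends to $1$ by density of $(S_i)$ in $L^2(p_0)$.

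The main obstacle is coordinating the growth of $d$ with the matrix concentration rates: bounds for $\hW$ and for the noise part of $\hA$ scale roughly like $\sqrt{d/(n_+ - m)}$ and $\sqrt{d/m}$ (with constants depending on $\|S\|_\infty$), while the smallest eigenvalue of $W$ may shrink with $d$ and contract the effective spectral gap. One must choose $d = d(m, n_+)$ large enough that the basis-approximation error $\|\mathrm{proj}_{\mathrm{span}(S_1,\ldots,S_d)} T - T\|_{L^2(p_0)}$ vanishes, yet small enough that spectral perturbation errors remain $o_p(1)$ relative to the gap; boundedness of the $S_i$ makes such a choice possible but it must be stated carefully. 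A secondary technical point is verifying the uniform-in-$i$ first-order mean expansion for $\E_{\theta_i} S$ in each coordinate, which follows from Cauchy--Schwarz applied to the DQM remainder together with the boundedness of $S$.
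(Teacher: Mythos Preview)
Your proposal is correct and follows the same overall strategy as the paper: establish population limits $A=W+\tau_m^2\,cc'$ and $W$ for $\hA$ and $\hW$, control the operator-norm perturbations via sub-gaussian concentration, and conclude via an eigenvector perturbation argument combined with density of the basis.

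Two differences are worth noting. First, the paper exploits the affine invariance of the procedure to assume $W=I_d$ and $\mu_0=0$ from the outset; this absorbs the ``shrinking smallest eigenvalue of $W$'' issue you flag into a growing sub-gaussian constant $B_d$ for the transformed basis, which is then handled by taking $d\to\infty$ slowly. Your route through general $W$ and the symmetrization $\gamma=W^{1/2}\beta$ is equivalent but carries that bookkeeping throughout. Second, rather than Davis--Kahan, the paper bounds the Rayleigh quotient uniformly, showing $\sup_\beta |\hat f(\beta)-f(\beta)|=o_p(\tau^2/(m-1))$ directly from $\|\hA-A\|_\op$ and $\|\hW-I_d\|_\op$; since $f(\beta)=\tau^2\rho_\beta^2/(m-1)+1$, this immediately yields $\rho_{\hat\beta}^2\geq\rho_{\beta^*}^2+o_p(1)$ without translating an angle bound back into a correlation. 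Your DQM expansion via Cauchy--Schwarz on the remainder is essentially the paper's Lemma on moment differentiability, and your four-way decomposition of $\hA$ (signal, noise, cross, plus approximation error) matches the paper's $Q_1+Q_2+Q_3+Q_4$.
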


This proof and all others are deferred to the technical supplement, but the following derivation motivates $\hat\beta$ heuristically: Let $\mu_0 = \E_0 S(X)$, $c = \Cov_0(S(X), T(X))$, $D = \E_0[TSS']$, and write
\vspace{-.1cm}
\begin{align}
\mu_i &\;=\; \E_{\theta_i} S(X) 
\;\approx\; \int_{\cX} S(x) p_0(x)(1 + \theta_i'T(x)) \, d\nu(x)
\;=\; \mu_0 + \theta_i c,\\
W_i &\;=\; \Cov_{\theta_i} S(X)
\;\approx\; W + \theta_i (D + \mu_0 c' + c \mu_0') + \theta_i^2 cc' 
= W + O(\theta_i)
\end{align}

Writing $T_\beta(x) = \beta'S(x)$, we would ideally choose $\beta$ to maximize correlation with the score:
\vspace{-.1cm}
\[
\rho_\beta^2  = \frac{\Cov_0(T_\beta(X), T(X))^2}{\Var_0(T_\beta(X))} = \frac{\beta'cc'\beta}{\beta'W\beta}.
\]
Since $c$ and $W$ are unknown, we estimate $\hW \approx W$ and maximize a proxy for $\rho_\beta^2$. If $\bar\theta = n_+^{-1}\sum_i n_i \theta_i$ then the expectation of the between-groups covariance matrix $\smash{\hA}$ is 
\begin{equation}\label{eq:Astar}
  A = \E \hA \;\approx\; \frac{1}{m-1}\sum_{i=1}^m n_i (\theta_i-\bar\theta)^2 cc' + \left(1 -\frac{n_i}{n_+} \right)W_i
       \;=\; \frac{\tau^2}{m-1} cc' + W + O(\max_j |\theta_j|),
\end{equation}
where $\tau^2 = \sum_{i=1}^m n_i(\theta_i-\bar\theta)^2$ is the overall signal strength, leading to the objective function:
\begin{align} \label{eqn:objectiveFunction}
\hat f(\beta) &\;=\; \frac{\beta'\hA\beta}{\beta'\hW\beta} 
\;\approx\; \frac{\beta'A\beta}{\beta'W\beta} = \frac{\tau^2\rho_\beta^2}{m-1} + 1 = f(\beta)
\end{align}
$\hat f(\beta)$ is maximized by $\hat\beta=\hW^{-1/2}\tilde\beta$, where $\tilde\beta$ is the leading eigenvector of $\hW^{-1/2}\hA\;\hW^{-1/2}$.

Note that Theorem~\ref{thm:consistencyT} does not tell us how fast to take $d\to\infty$, much less how to choose $d$ in finite samples. A more refined analysis would reveal that the choice depends greatly on quantities like the signal strength $\tau^2$, the correlation under $p_0$ of the basis elements, and the smoothness of $\cP$; since these are all unknown, in practice we recommend choosing $d$ via cross-validation: since Type I error on held-out data is always $\alpha$, we can choose $d$ to maximize the number of rejections. Cross-validation can also help choose between competing bases. In our motivating problems we have found natural spline bases with $d=10$ to $15$ sufficient to achieve very high correlations with the true $T(x)$. 

The assumption that $S$ is bounded can be weakened to subgaussianity and mild moment assumptions, but these assumptions are unverifiable without knowing $\cP$ (unless $S$ is bounded). Still, if practical domain knowledge motivates using a given unbounded basis in a given setting, there is nothing in the theory to suggest our method will fail. Once again, cross-validation is the best guide.

Our method is highly computationally efficient. Most of the work is computing the summary statistics $\bar S_1,\ldots, \bar S_m \in \R^d$ and $\hW \in \R^{d\times d}$, which can be done in a single parallelizable scan over the entire data set, in time $\mathcal{O}(n_+d^2)$. The remaining computation to obtain $\hT$ requires $\mathcal{O}(m d^2 + d^3)$ time.

{
	\subsection{Estimating the Base Measure $p_0(x)$}

To produce the plots in Figure~\ref{fig:laplace}, we use a method following \citet{efron1996using}: we first bin the data, computing $N_{i,b} = \#\{j:\; X_{i,j} \in \text{bin } b\}$, and choosing for each bin a representative value $x_b = \E_{0}[X \mid X\in \text{bin } b]$. Using $\E N_{i,b} \approx n_i e^{\theta T(x)} p_0(x_b)$, we then estimate a Poisson GLM with 
$\log \E N_{i,b} = \alpha_i + \zeta_b + \theta_i \hT(x_b)$
 and set  $\hat p_0(x) \propto \sum_b e^{\hat\zeta_b} \delta_{x_b}(x)$.
}  

\section{Inference with an Estimated Family}\label{sec:inference}

After obtaining $\hcP$, we  want to use it for future experiments. The main result in this section is that, if $\corr_0(\hT(X), T(X) |\hT)^2 \to 1$, then hypothesis tests in future experiments are asymptotically efficient; in other words, we can do as well as if we knew the true family of distributions $\cP$ ahead of time.  Theorem \ref{thm:optimality} considers two-sample testing in a held-out experiment as discussed in Section \ref{sec:dqm}.

\begin{theorem}\label{thm:optimality}
	Let $\cP=\{p_{\theta}:\; \theta\in\Omega \sub \R\}$ be DQM at $\theta=0\in\Omega^o$, with score function $T(x)$ at $\theta=0$, and consider a sequence of estimators $\hT_n$ with $\corr_0(\hT_n(X), T(X) | \hT_n) \toProb 1$ as $n \to \infty$, independently of the two samples. If $n_{m+1},n_{m+2}\to \infty$ with $\theta_{m+1}=t_{m+1}/\sqrt{n_{m+1}}$ and $\theta_{m+2}=t_{m+2}/\sqrt{n_{m+2}}$, then the test rejecting for large $\Delta_{\hT_n}$ is asymptotically efficient for testing the one-sided hypothesis $H_0:\; \theta_{m+1} \leq \theta_{m+2}$ versus $H_1:\; \theta_{m+1} > \theta_{m+2}$.
\end{theorem}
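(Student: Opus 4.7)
My plan is to reduce Theorem~\ref{thm:optimality} to the limiting-distribution statement (\ref{eq:varDelta}) plus the classical DQM optimality theory, with a conditioning-on-$\hT_n$ argument to handle the randomness of the estimated score. The key facts exploited are that $\hT_n$ is independent of the two held-out samples, that $\Delta_U$ depends on $U$ only through $U(X_{m+j,\cdot})$ and $\hat\sigma_U^2$, and that the DQM assumption together with the $1/\sqrt{n}$ scaling gives local asymptotic normality for the new experiment regardless of what $\hT_n$ turns out to be.

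First I would condition on $\hT_n$ and apply (\ref{eq:varDelta}) with $U=\hT_n$ held fixed. This yields the conditional convergence
\[
\Delta_{\hT_n}\,\big|\,\hT_n \;\convd\; \Nr\!\left(\tfrac{t_{m+1}-t_{m+2}}{2}\,\corr_0(\hT_n, T\mid \hT_n),\;1\right).
\]
The pooled sample variance $\hat\sigma_{\hT_n}^2$ consistently estimates $\Var_0(\hT_n\mid \hT_n)$ by the law of large numbers, so the studentization in $\Delta_{\hT_n}$ is asymptotically correct under the contiguous alternative. By hypothesis $\corr_0(\hT_n, T\mid \hT_n)\toProb 1$, so the limiting mean collapses to $(t_{m+1}-t_{m+2})/2$. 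Unconditioning (using Slutsky on the random mean parameter, or by passing to almost-surely convergent subsequences on the high-probability event $\{\corr_0(\hT_n, T\mid \hT_n)>1-\epsilon\}$) gives that $\Delta_{\hT_n}$ has the same unconditional limit as $\Delta_T$, under both $H_0$ and the local alternative.

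Next, I would invoke the classical DQM/Neyman--Pearson argument (Chapter~15 of \citet{van2000asymptotic}): under $1/\sqrt{n}$ local scaling the one-sided test based on $\Delta_T$ is asymptotically equivalent to the likelihood-ratio test and attains Le Cam's power envelope for $H_0:\theta_{m+1}\le \theta_{m+2}$ versus $H_1:\theta_{m+1}>\theta_{m+2}$. Since $\Delta_{\hT_n}$ has the same limiting distribution as $\Delta_T$ under every contiguous alternative, the test based on $\Delta_{\hT_n}$ inherits this asymptotic optimality, with the same limiting power function $\Phi\!\left(\tfrac{t_{m+1}-t_{m+2}}{2}-z_\alpha\right)$.

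The main obstacle is the uniformity needed to upgrade the conditional convergence to an unconditional one, because both $\hT_n$ and $\corr_0(\hT_n,T\mid\hT_n)$ are random and vary with $n$. Making this rigorous requires either a stable-convergence form of Le Cam's Third Lemma that allows a data-independent random integrand $U=\hT_n$, or a subsequence argument showing that the residual component of $\hT_n$ orthogonal to $T$ under $p_0$ contributes negligibly to both the numerator (via contiguity and the DQM expansion) and the denominator (via LLN applied uniformly over $\hT_n$) of $\Delta_{\hT_n}$ under the contiguous alternative. A secondary nuisance is verifying that the moment conditions implicit in (\ref{eq:varDelta})---namely $\Var_0(\hT_n\mid\hT_n) < \infty$ and a Lindeberg-type bound for $\hT_n(X)$ under the local alternative---hold uniformly enough to survive unconditioning.
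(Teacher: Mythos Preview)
Your outline is sound and would eventually work, but the paper's proof takes a shorter route that completely avoids the uniformity obstacle you correctly flag. Rather than computing the limiting distribution of $\Delta_{\hT_n}$ directly under the local alternative via (\ref{eq:varDelta}) and then wrestling with unconditioning, the paper first shows that $\Delta_{\hT_n} - \Delta_T \to 0$ in probability under the \emph{null} $t_{m+1}=t_{m+2}=0$; this is an easy consequence of $\corr_0(\hT_n,T\mid\hT_n)\toProb 1$ together with the independence of $\hT_n$ from the held-out samples. It then invokes contiguity of $P_{\theta_{m+1}}^{n_{m+1}}\times P_{\theta_{m+2}}^{n_{m+2}}$ to $P_0^{n_{m+1}+n_{m+2}}$ (a standard DQM consequence) to transfer this $o_P(1)$ relation to every local alternative automatically. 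Since the test based on $\Delta_T$ is asymptotically efficient by the two-sample version of Theorems~15.5 and~15.14 in \citet{van2000asymptotic}, $\Delta_{\hT_n}$ inherits the same limiting power function. The payoff of the paper's route is that no stable-convergence machinery, no subsequence argument, and no uniformity in $\hT_n$ are needed: $o_P(1)$ statements survive passage to contiguous measures for free. Your approach trades this simplicity for a more explicit power calculation, which is conceptually natural but leaves exactly the technical gap you identified.
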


Thus, our procedure produces an asymptotically efficient test as long as the appropriate sign can be correctly identified. This is typically simple as one is interested in the mean or some monotone increasing transformation of the metric. One  chooses the sign which makes $\corr(\hat{T}(X), X) > 0$.

The notion of asymptotic efficiency for multidimensional families is more complex as there is no uniformly most powerful test. However, one may still obtain asymptotically efficient tests when examining only one direction of the parameter space. In particular, for testing a difference of means, if one tests for a difference in $\nabla_0 \mu(\theta) (\theta_1 - \theta_0)$ where $\mu(\theta)$ maps the parameters to the mean, the resulting test is asymptotically efficient for testing a one-sided hypothesis for the difference in means.

\subsection{Other Inferences} \label{sec:otherInf}

The uses to which we can put our fitted family extend far beyond hypothesis testing. Generally speaking, we can use $\hcP$ in any of the myriad ways that we can use a parametric model that is specified in advance. The benefit of $\hcP$ over a prespecified model, however, if that we can proceed with the confidence that our parametric ``assumptions'' are actually validated by the data rather than having been chosen by fiat for convenience's sake.

For example, suppose that we are given a sample $X_{0,1},\ldots,X_{0,n_0}$ from some new population $p_{\theta_0}(x)$, and we wish to estimate some new parameter $\mu(\theta_0)$, say $\mu=\E_{\theta_0} X$. Given a confidence set $C_{\theta_0}$, we can derive a corresponding confidence set for $\mu$ via $C_\mu = \{\hEE_\theta X :\; \theta \in C_{\theta_0}\}$. In particular, if $\mu$ and $\theta_0$ are univariate and $C_{\theta_0}$ is an interval, and $\smash{\frac{d\mu}{d\theta}(0) \neq 0}$, then for small values of $\theta$ we can simply transform the endpoints of $C_{\theta_0}$ to obtain the endpoints of $C_\mu$. We apply this method in Section~\ref{sec:simulations} to obtain $95\%$ intervals for $\mu=\E_\theta X$, and see that the intervals enjoy their advertised coverage.

\section{Simulations}\label{sec:simulations}

\subsection{Laplace Location Family}\label{sec:laplace}

We revisit the Laplace location family discussed in Section~\ref{sec:notation}. We use a natural spline basis with 11 degrees of freedom. Note that the smooth spline basis is intentionally somewhat mismatched to the score $T(x)=\sgn(x)$ that we are trying to learn. This reflects the fact that we cannot rely on choosing a ``good'' basis in advance for real data. We generate $m=5000$ groups with $n_i=100$ data points each, with $\theta_i \sim \Nr(0,0.1^2)$. 

Despite the mismatched basis, our method learns the score up to correlation $\rho^2=93.8\%$. The $95\%$ confidence intervals for $\theta_i$ achieve their advertised coverage, with 248 of 5000 failing to cover. The average interval width is 0.4, which is $30\%$ shorter on average than one-sample $t$-intervals. Although $n_i$ is much too small to reliably distinguish $\theta=0$ from $\theta=0.1$ --- and would be even if we knew the true parametric form in advance --- by combining many groups together we can still learn much richer information, piecing together a very accurate local approximation to the family $\cP$.
 
\subsection{Log-Gamma Family}\label{sec:loggamma}
Next, we evaluate our method on the log-gamma family. That is, $\log(X_{i,j})$ is gamma-distributed with scale parameter $0.4$ and shape parameter $\theta_i$. Thus, population $i$ has exponential family density
\[
p_{\theta_i}(x) = e^{\theta_i \log(\log(x))} \cdot 
\frac{x^{-3.5} \log(x)^{-1}}{ 0.4^{\theta_i}\, \Gamma(\theta_i)},
\quad \text{ for } x>1.
\]
We have chosen the log-gamma family because of its heavy right tail, making it qualitatively similar to some real world metrics. We simulate $m=100$ groups with $n_i=5000$ observations each, with parameters $\theta_i$ drawn from $\Nr(3, 0.5^2)$, and use a natural spline basis with 11 degrees of freedom.

Figure~\ref{fig:loggamma} shows results from the simulation. In the left panel we plot the fitted sufficient statistic against the (centered and scaled) true score, $1.6\log(\log(x)) + 0.014$. Intuitively, $T(x)$ is forced to be highly concave because of the log-gamma distribution's heavy right tail --- for example, $T(x)=x$ is completely out of the question because $p_0(x)e^{\ep x}$ is not even normalizable for any $\ep>0$. Thus, the sample mean is a very poor summary of the data with relative efficiency only $7.7\%$; in other words, tests based on the sample mean effectively discard over $90\%$ of the data. By contrast, our method obtains an accurate estimate that is $98.6\%$ efficient.

The right two panels of Figure~\ref{fig:loggamma} show $95\%$ confidence intervals for the expectation $\mu_i = \E_{\theta_i}[X]  = (1-0.4)^{-\theta_i}$ of each treatment group computed with our method, alongside $t$-intervals for the same parameters. The intervals using the fitted family are much shorter despite retaining the correct advertised coverage; this is because they are based on a much more efficient summary of the data.

\begin{figure} 
  \centering
  \begin{tabular}{ccc}
    \includegraphics[width=.28\textwidth, height=.28\textwidth]{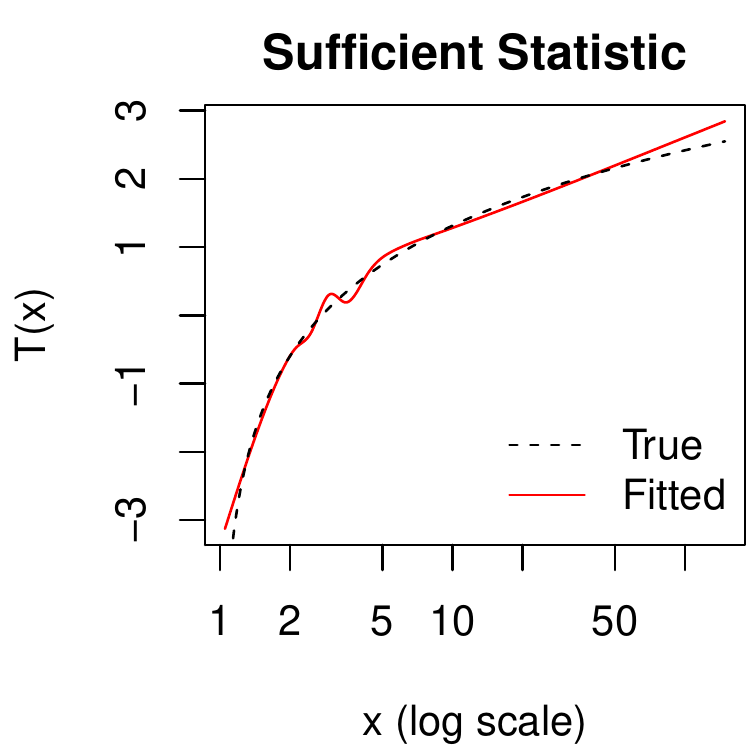} 
    & 
      \includegraphics[width=.28\textwidth, height=.28\textwidth]{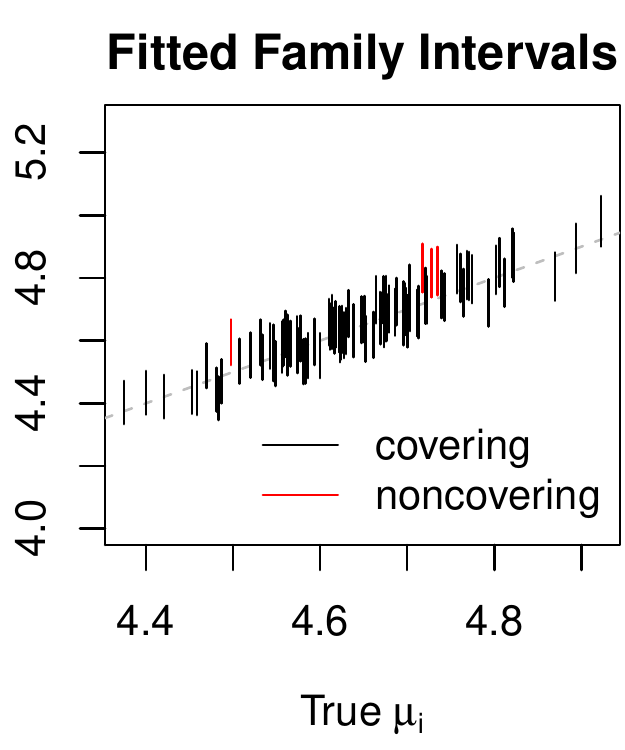}		
    & 
      \includegraphics[width=.28\textwidth, height=.28\textwidth]{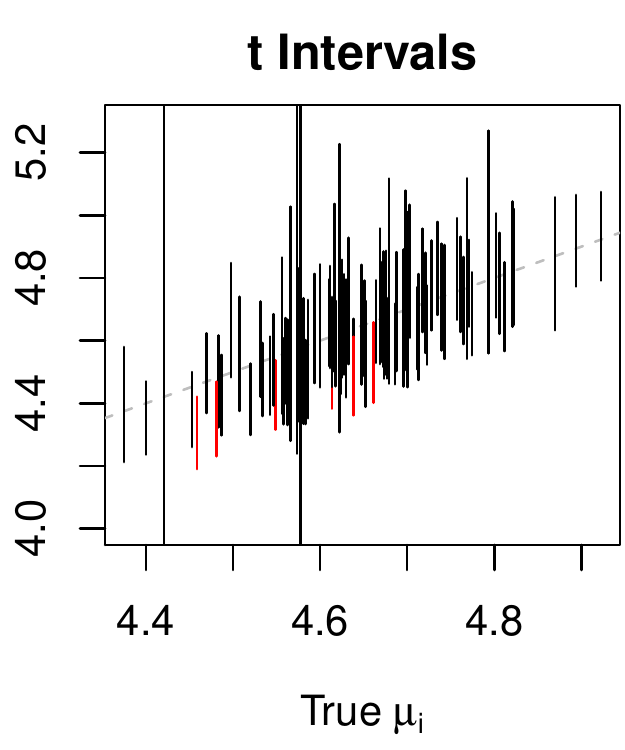}		
  \end{tabular}
\vspace{-0.1cm}
  \caption{Log-gamma results. Left panel: The fitted sufficient statistic is highly correlated with the true score $(\rho^2 = 98.6\%)$. Right two panels: $95\%$ confidence intervals for each $\mu_i$ using the fitted family, versus $t$-intervals for same.  The horizontal axis shows the true target $\mu_i$ for each interval. Both methods achieve the advertised coverage, but the $t$-intervals are $2.7$ times as long on average.}
  \label{fig:loggamma}  
\end{figure}

\section{Results on Facebook Data}\label{sec:facebook}

To give an example of our method in action, we demonstrate it on
Facebook A/B testing data, which provided the motivation for devising
our method. Some changes to Facebook result in small, incremental improvements to metrics of interest. These small differences are difficult to detect and necessitate either larger experiments or more effective measurements. The empirical results bear out that experimental effects behave locally like a low dimensional parameterized family. 
Furthermore, we are able to obtain a significantly more powerful test. 
Table \ref{tbl:RE} shows the estimated relative efficiencies of various statistics compared to using the estimated sufficient statistic.

The data consists of de-identified, aggregated 
distributions of rescaled outcomes over large, random subsets of users to ensure privacy. In the examples here, the data consists of 195 subpopulations with average sample size exceeding $2\times 10^6$. The x-axes are rescaled by some constant and the far tails of the distribution are truncated beyond the $99^{th}$ percentile in the plots .
\begin{figure}
	\centering
	\begin{tabular}{ccc}	
		\includegraphics[width=.28\textwidth, height=.27\textwidth]{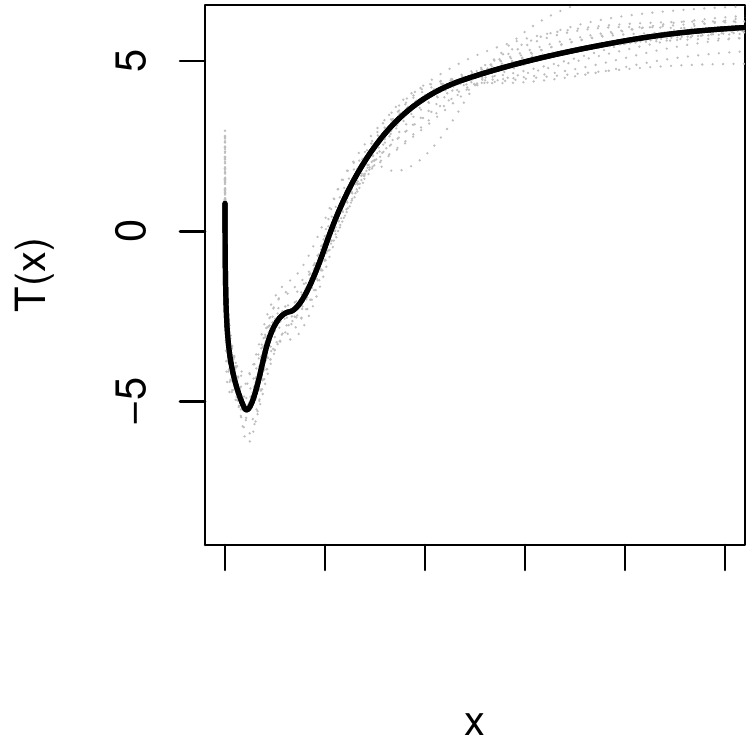}
		&	
		\includegraphics[width=.28\textwidth, height=.27\textwidth]{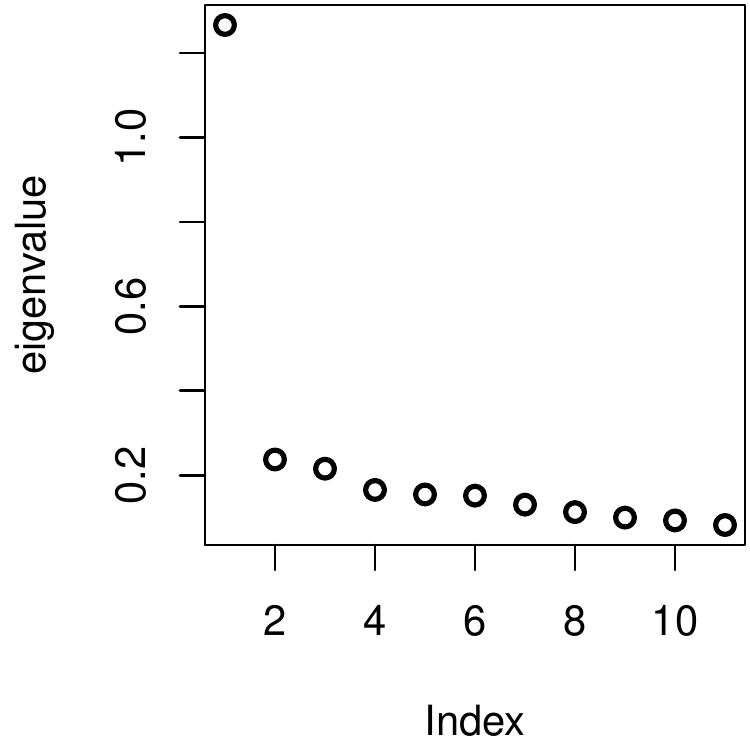}
		&	
		\includegraphics[width=.28\textwidth, height=.27\textwidth]{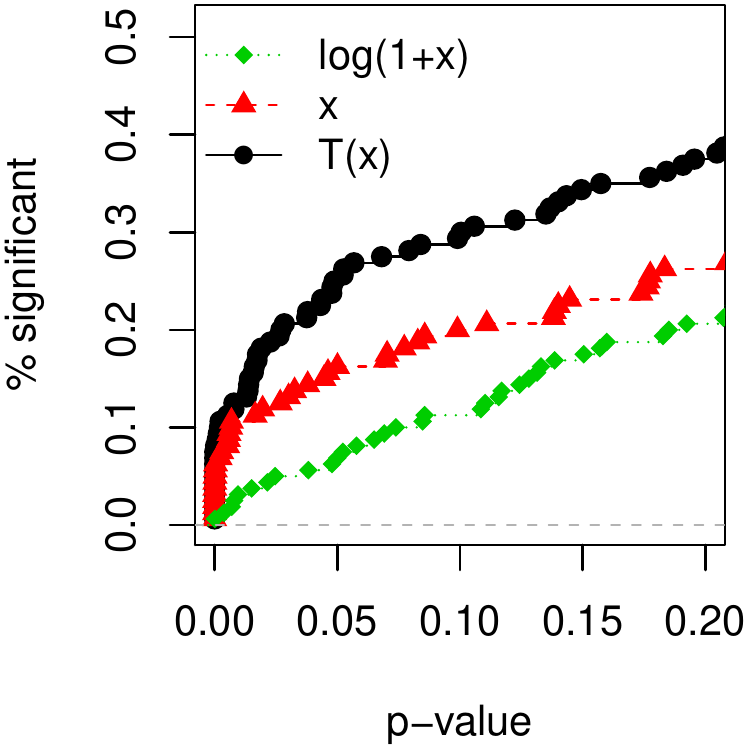}
	\end{tabular}
\vspace{-0.1cm}
	\caption{Left: Estimated sufficient statistics for time spent on the Facebook website. The black line shows the estimate using all the data. The gray lines show fits from bootstrapping. Center: Scree plot showing only one obvious sufficient statistic. Right: Cumulative distribution of $p$-values for tests using original values (red), log-transformed values (green), estimated sufficient statistic (black).}
	\label{fig:timeSpent}  			
\end{figure}

We show two metrics that can be of interest: number of photo likes and time spent on the Facebook website. These can be used as diagnostic measures for understanding user experience, for instance to detect bugs in the code or, for time spent, understanding if the page can be improved to provide users' relevant information more quickly. 
Both are measured at the user level over some period of time. 
For the photo likes metric, figure \ref{fig:photoLikes} shows that the estimated sufficient statistic has a shape similar to a log transformation.
This matches what one would expect to be a good transformation since the distribution of photo likes has a long tail. 
For the time spent metric, 
it also down-weights large values and effectively winsorizes them. 
Unlike the photo likes metric, the estimated sufficient statistic is clearly not monotonic, and we can only speculate as to the causes of the dip. However, the sufficient statistic leads to substantially more powerful tests as shown in figure \ref{fig:timeSpent}.
The proportion of significant tests at the 5\% level jumps from 15\% to 24\%. This difference has a p-value of 0.002 using McNemar's test. Interestingly, the time spent 
 sufficient statistic also demonstrates that 
the log transform is not always a good transform even if a distribution has fairly "heavy" tails. In this case the log transform performs worse than the untransformed data. It has thick tails as the standard deviation is high relative to the mean; the coefficient of variation is over 400\%.

\begin{figure}
	\centering
	\begin{tabular}{ccc}	
		\includegraphics[width=.28\textwidth, height=.28\textwidth]{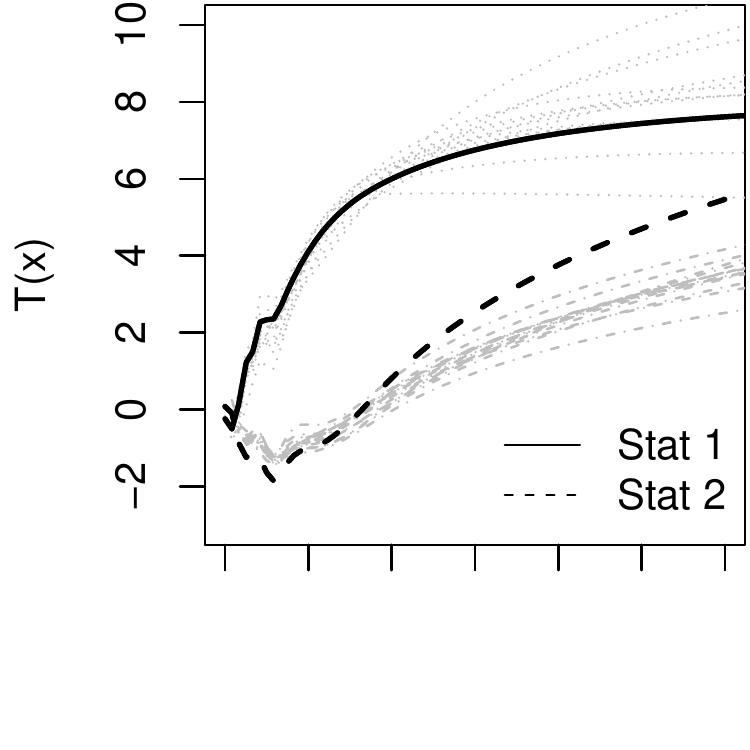}
		&	
		\includegraphics[width=.28\textwidth, height=.28\textwidth]{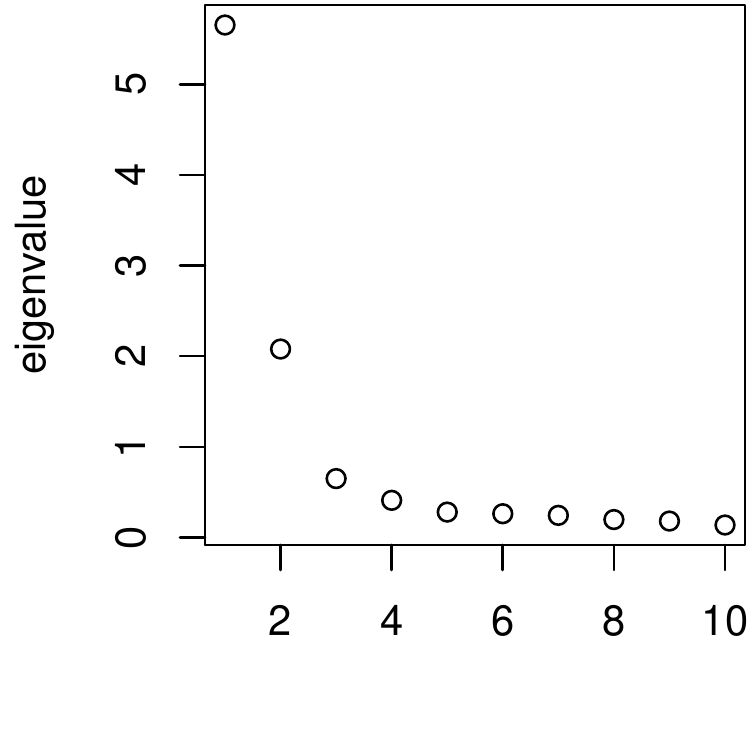}
\end{tabular}
	\vspace{-.5cm}
	\caption{Left: Estimated sufficient statistics for number of photo likes. The black lines shows the estimate using all the data. The grey lines show fits when bootstrapping the data. Center: Scree plot showing two possible sufficient statistics.}
	\label{fig:photoLikes}  
\end{figure}

\begin{table}
  \centering
  \begin{tabular}{cccc}
    Scenario & \multicolumn{3}{c}{Test Statistic}\\[5pt]
             & $X$ & $\log(1+X)$ & True $T(X)$ \\[3pt]
    
    Log-gamma & 0.08 & 0.85 & 1.01\\
    Laplace location & 0.53 & 0.71 & 1.07\\
    Facebook time spent & 0.32 & 0.04 & ---\\
    &&&
  \end{tabular}
  \caption{Asymptotic efficiency of tests relative to estimated sufficient statistic.}
  \label{tbl:RE}	
\end{table}

\section{Discussion and Conclusion}\label{sec:discussion}
We propose a method that non-parametrically learns a low-dimensional parametric family.
When applied to A/B testing, this allows combining many low powered experiments to find asymptotically efficient hypothesis tests and estimators. The efficacy of the method is demonstrated on real world A/B test data where it triples the effective sample size relative to a difference in means z-test.

One natural question for further study is how to choose the number of sufficient statistics to estimate. As shown in the scree plots in figures \ref{fig:timeSpent} and \ref{fig:photoLikes}, a good choice is often obvious. More systematically, one can split the data into test and training sets and evaluate a goodness of fit criterion such as a likelihood ratio.

One assumption that can be restrictive is that all treatment groups are drawn from a common user population. This is necessary for the spectral method that we have proposed in order to obtain an appropriate scaling matrix $W$. However, this assumption is not necessary if an exponential family model is learned directly. In this case, one fits the collection of models $\cM_i = p_{i,0}(x) e^{\theta_{i,j}'T(x) - \psi_i(x)}$ where $i$ denotes the subpopulation and $j$ the treatment group within that subpopulation. This can be fit by fitting Poisson generalized linear models that alternately estimate the parameters $\theta$ and the sufficient statistics $T$.

It would also be interesting to extend this work to handle multivariate responses. This may be done by extending the basis $S$ from a univariate to a multivariate basis. However, in order to well approximate any sufficient statistic, the number of needed basis elements may grow exponentially in the dimension; structural assumptions such as additivity or sparsity may be essential in this setting.

\textbf{Acknowledgement:}
We thank Facebook for allowing us to report results on their dataset. The authors are grateful for stimulating conversations with Jackson Gorham, Stefan Wager, Joe Romano, and Trevor Hastie. William Fithian was supported in part by the Gerald J. Lieberman Fellowship.

\bibliographystyle{plainnat}
\bibliography{biblio}

\appendix

\newcommand{\cN}{\mathcal{N}}
\newcommand{\tZ}{\widetilde{Z}}
\newcommand{\tX}{\widetilde{X}}
\newcommand{\tmu}{\tilde{\mu}}
\newcommand{\hrho}{\hat{\rho}}
\newcommand{\empCov}{\mathbb{C}ov}

\newcommand{\txi}{\tilde{\xi}}
\section{Preliminaries}

We begin by stating a lemma relating quadratic mean differentiability to differentiability of moments, which we will need in the proof.

\begin{lemma}\label{lem:qmd}
  Assume that $\cP = \{p_\theta:\; \theta\in \Omega \sub \R\}$ is QMD in a neighborhood of 0, and $T\in L^2(p_0)$ is the score function at $\theta=0$.
  Let $\mathcal{F}$ be a set of real-valued functions. If there exists $B > 0$ such that for all $f \in \mathcal{F}$, $\E_\theta f^2 < B$ for all $\theta$ in a neighborhood of 0, and $\E_0 |f^2 T|, \E_0 |fT^2|, \E_0 f^2T^2 < B$, then as $\theta\to 0$ we have 
  \begin{equation}\label{eq:qmdConclusion}
  \sup_{f \in \mathcal{F}} \left\vert \E_\theta - (\E_0 f + \theta \E_0 fT) \right\vert = o(|\theta|).
  \end{equation}
\end{lemma}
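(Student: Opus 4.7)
The plan is to expand $p_\theta$ via the QMD representation, integrate against an arbitrary $f \in \mathcal{F}$, and dispatch the remainder terms using Cauchy--Schwarz together with the hypothesized moment bounds, verifying that every bound depends on $f$ only through the uniform constant $B$.

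Squaring the QMD expansion $\sqrt{p_\theta} = \sqrt{p_0}(1+\theta T/2) + r_\theta$ yields
\begin{equation*}
p_\theta = p_0 + \theta T p_0 + \tfrac{\theta^2}{4} T^2 p_0 + 2\sqrt{p_0}(1+\theta T/2)\,r_\theta + r_\theta^2,
\end{equation*}
so $\E_\theta f - \E_0 f - \theta \E_0 fT$ equals $\tfrac{\theta^2}{4}\E_0 fT^2$ plus two integrals involving $r_\theta$. The first is $O(\theta^2)$ uniformly by the hypothesis $\E_0|fT^2| < B$. The cross term $2\int f \sqrt{p_0}(1+\theta T/2)\,r_\theta\,d\nu$ splits into a $\sqrt{p_0}$ piece and a $\theta T\sqrt{p_0}$ piece, both handled by Cauchy--Schwarz in $L^2(\nu)$: the former is bounded by $\|f\|_{L^2(p_0)}\|r_\theta\|_{L^2(\nu)} \le \sqrt{B}\cdot o(|\theta|)$, and the latter by $|\theta|\sqrt{\E_0 f^2 T^2}\cdot o(|\theta|) = o(|\theta|)$.

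The main obstacle is the $\int f r_\theta^2\,d\nu$ term: a naive Cauchy--Schwarz would leave $\int f^2\,d\nu$, which the hypotheses do not control since $\nu$ need not be a probability measure. The trick is to use the triangle inequality $|r_\theta|\le \sqrt{p_\theta} + \sqrt{p_0} + \tfrac{|\theta|}{2}\sqrt{p_0}|T|$ (immediate from the definition of $r_\theta$ and $|1+\theta T/2|\le 1+|\theta T|/2$) to factor one copy of $|r_\theta|$ out as a bounded-$L^2(\nu)$-norm quantity, thereby restoring a density on the other copy. This bounds $\int |f| r_\theta^2\,d\nu$ by a sum of three Cauchy--Schwarz products, each of the form $\sqrt{\E_\bullet f^2(\cdot)}\cdot\|r_\theta\|_{L^2(\nu)} \le \sqrt{B}\cdot o(|\theta|)$, using $\E_\theta f^2 < B$ (valid uniformly for $\theta$ near $0$), $\E_0 f^2 < B$, and $\E_0 f^2 T^2 < B$ respectively. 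Since every bound depends on $f$ only through $B$, taking $\sup_{f\in\mathcal{F}}$ preserves the $o(|\theta|)$ rate, proving~\eqref{eq:qmdConclusion}.
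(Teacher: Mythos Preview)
Your proof is correct and follows essentially the same approach as the paper: square the QMD expansion, isolate the $\theta\E_0 fT$ term, and control the remainder via Cauchy--Schwarz using the uniform moment bounds. The paper streamlines your treatment of the last two terms by first regrouping $2\sqrt{p_0}(1+\theta T/2)\,r_\theta + r_\theta^2 = r_\theta(\xi_\theta + \tilde\xi_\theta)$ (where $\xi_\theta=\sqrt{p_\theta}$, $\tilde\xi_\theta=\sqrt{p_0}(1+\theta T/2)$, so $r_\theta + 2\tilde\xi_\theta = \xi_\theta + \tilde\xi_\theta$), which accomplishes in one algebraic step what your triangle-inequality ``trick'' on $|r_\theta|$ does --- namely, reintroducing a factor of $\sqrt{p_\theta}$ so that the hypothesis $\E_\theta f^2 < B$ can be invoked.
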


\begin{proof}

  Let $\xi_\theta = \sqrt{p_\theta}, \txi_\theta = \sqrt{p_0}(1+\theta T/2)$. QMD implies that $\| r_\theta \| = \|\xi_\theta - \txi_\theta\| = o(|\theta|)$ in norm, where the norm is defined with respect to the $L^2(\nu)$ inner product $\langle u, v\rangle = \int_{\cX}u(x)v(x)\,dx$. Then%Lemma (1) of Pollard shows that $\langle \xi_0, r_\theta\rangle = O(\theta^2)$.
\begin{align}
  \E_\theta f - \E_0 f
  &= \langle f (\txi_\theta + r_\theta), \txi_\theta + r_\theta\rangle 
    - \langle f\xi_0, \xi_0 \rangle\\
  &= \langle f \txi_\theta, \txi_\theta\rangle 
    - \langle f\xi_0, \xi_0 \rangle 
    + \langle r_\theta, f(r_\theta+2\txi_\theta)\rangle \\
  &= \langle\theta f T \xi_0, \xi_0\rangle 
    + \frac{1}{4}\langle f\theta^2T \xi_0, T\xi_0\rangle 
    + \langle r_\theta, f(\xi_\theta+\txi_\theta)\rangle \\
  &= \theta \E_0 fT + \frac{\theta^2}{4}\E_0 fT^2
    + \langle r_\theta, f(\xi_\theta+\txi_\theta)\rangle \\
  \left| \E_\theta f - (\E_0 f - \theta\E_0 fT) \right|
  &\leq \frac{\theta^2}{4}\E_0 fT^2
+ \|r_\theta \| \left|\E_\theta f^2 + \E_0 [f^2(1+\theta T/2)]\right|\\
  &\leq \frac{\theta^2 B}{4} + (2 +\theta/2) B \|r_\theta\|.
\end{align}
Because the final bound does not depend on $f$, the proof is complete.
\end{proof}

As an immediate corollary, suppose that $\sup_{x\in\cX} |S_i(x)| \leq B$ for all $i = 1,\ldots,d$. Then as $\theta \to 0$ we have for $c_i=\E_0 S_iT$ and $D_{ij}=\E_0 S_iS_jT$:
\[
\sup_{i} \left|\E_\theta S_i - (\E_0 S_i + \theta c_i)\right| \to 0 \qquad \text{ and } \qquad \sup_{i,j} \left|\E_\theta S_i S_j - (\E_0 S_i S_j + \theta D_{ij})\right| \to 0.
\]

\section{Proof of Theorem~\ref{thm:consistencyT}}

We restate Assumptions~\ref{assu:degrees}--\ref{assu:smallEffects} and Theorem~\ref{thm:consistencyT} here for convenience.

\begin{assumption}\label{assu:degrees}
$m, n_+-m \to \infty$ (the total degrees of freedom for $\hW$ and $\hA$ are growing)
\end{assumption}

\begin{assumption}\label{assu:signal}
$\frac{1}{m}\sum_{i=1}^m n_i(\theta_i-\bar\theta)^2 = \Omega(1)$ (the average signal strength does not shrink to 0)
\end{assumption}

\begin{assumption}\label{assu:smallEffects}
$\max_i |\theta_i| \to 0$ (effect sizes are small)
\end{assumption}

\setcounter{theorem}{0}
\begin{theorem}
  Under Assumptions~\ref{assu:degrees}--\ref{assu:smallEffects}, if we take $d \to\infty$ slowly enough, then $\hT(x)$ is consistent for $T(x)$ in $L^2(p_0)$ up to sign; that is, $\corr_0^2(\hT(X), T(X) | \hT) \toProb 1$.
\end{theorem}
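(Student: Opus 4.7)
The plan is to render rigorous the heuristic calculation given just after the theorem statement, treating the generalized eigenvalue problem for the matrix pencil $(\hA,\hW)$ as a small perturbation of a rank-one problem whose top generalized eigenvector encodes the $L^2(p_0)$-projection of $T$ onto $\mathrm{span}\{S_1,\ldots,S_d\}$ (call this projection $T_d$), and then sending $d\to\infty$ so that $T_d\to T$.

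First I would use Lemma~\ref{lem:qmd} with the bounded families $\{S_i\}_{i\leq d}$ and $\{S_iS_j\}_{i,j\leq d}$ to obtain uniform-in-$i$ expansions $\mu_i=\mu_0+\theta_i c+o(\theta_i)$ and $W_i=W+O(\theta_i)$; combined with Assumption~\ref{assu:smallEffects}, this upgrades \eqref{eq:Astar} to a genuine entrywise bound $\E\hA=\tfrac{\tau^2}{m-1}cc'+W+o(1)$ and $\E\hW=W+o(1)$. I would then apply a matrix Bernstein or Hoeffding inequality, using boundedness of the basis, to show $\|\hW-\E\hW\|_{\op}$ and $\|\hA-\E\hA\|_{\op}$ are $o_p(1)$ provided $d$ grows slowly relative to $\sqrt{n_+-m}$ and $\sqrt m$ respectively. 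This is exactly the ``$d\to\infty$ slowly enough'' clause in the theorem, and Assumption~\ref{assu:degrees} provides the budget.

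Next I would pass to the whitened matrix $M:=\hW^{-1/2}\hA\,\hW^{-1/2}$, whose leading eigenvector is $\tilde\beta=\hW^{1/2}\hat\beta$. Combining the two preceding steps,
\begin{equation}
M - I \;=\; \frac{\tau^2}{m-1}\,W^{-1/2}cc'W^{-1/2} \;+\; o_p(1),
\end{equation}
so the dominant part of $M-I$ is a rank-one matrix with unique nonzero eigenvalue $\tfrac{\tau^2}{m-1}\,c'W^{-1}c=\tfrac{\tau^2}{m-1}\|T_d\|_{L^2(p_0)}^2$ and unit eigenvector $W^{-1/2}c/\|W^{-1/2}c\|$. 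Assumption~\ref{assu:signal} forces $\tau^2/(m-1)=\Omega(1)$ and density of $(S_i)$ makes $\|T_d\|_{L^2(p_0)}\to 1$, providing the spectral gap needed by a Davis--Kahan $\sin\Theta$ bound. Thus $\tilde\beta\to\pm W^{-1/2}c/\|W^{-1/2}c\|$ in probability, so $\hat\beta\to\pm W^{-1}c/\|W^{-1/2}c\|$ and $\hT\to\pm T_d/\|T_d\|_{L^2(p_0)}$ in $L^2(p_0)$. Because $(S_i)$ is dense in $L^2(p_0)$ and $T\in L^2(p_0)$ by DQM, $\|T_d-T\|_{L^2(p_0)}\to 0$ as $d\to\infty$, and choosing $d$ slowly enough for both sources of error to vanish simultaneously yields $\corr_0^2(\hT,T\mid\hT)\toProb 1$.

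The main obstacle is the concentration step at growing $d$: one must (i) keep the uniform moment expansions of Lemma~\ref{lem:qmd} valid as $d\to\infty$, (ii) obtain a matrix concentration bound whose dimension dependence is cheap enough to beat both $\sqrt{n_+-m}$ (for $\hW$) and $\sqrt m$ (for $\hA$, where the between-groups summands have variance driven by the drifts $\theta_i c$ and a bounded residual), and (iii) ensure that $\lambda_{\min}(W)$ does not collapse too quickly so that whitening remains stable and the rank-one perturbation analysis goes through. Once these quantitative rates are pinned down, Davis--Kahan and $T_d\to T$ are comparatively routine, with the spectral gap supplied directly by Assumption~\ref{assu:signal}.
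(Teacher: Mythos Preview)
Your proposal is correct and follows essentially the same line as the paper: control $\|\hA-A\|_\op$ and $\|\hW-W\|_\op$ by combining the moment expansions from Lemma~\ref{lem:qmd} (deterministic part) with sub-gaussian matrix concentration (random part), then read off eigenvector consistency by perturbation, and finally let $d\to\infty$ slowly so that both the approximation error $T_d\to T$ and the estimation error vanish.

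Two cosmetic differences are worth flagging. First, the paper exploits the affine invariance of the estimator to reduce at the outset to $W=I_d$, $\mu_0=0$; this sidesteps your obstacle (iii) about $\lambda_{\min}(W)$ entirely, at the cost of a transformed basis whose sup-norm bound $B_d$ may grow with $d$ (which is simply absorbed into ``$d$ slowly enough''). Second, in place of Davis--Kahan the paper compares the empirical and population Rayleigh quotients $\hat f(\beta)=\beta'\hA\beta/\beta'\hW\beta$ and $f(\beta)=\beta'A\beta/\beta'\beta$ uniformly in $\beta$, then uses $f(\hat\beta)\geq f(\beta^*)-2\sup_\beta|\hat f(\beta)-f(\beta)|$ to conclude $\rho_{\hat\beta}^2\geq\rho_{\beta^*}^2+o_p(1)$ directly; this is essentially equivalent to Davis--Kahan for a rank-one signal but has the advantage that $f$ already encodes the target $\rho_\beta^2$.
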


In the statement of the Theorem, $X\sim p_0$ independently of the data points used to estimate $\hT$.

We will require a technical lemma that slightly generalizes Theorem 5.39 of~\citet{vershynin2010introduction}.
\begin{lemma}\label{lem:rmt}
Assume $A_1,\ldots,A_m$ are independent zero-mean $\sigma$-subgaussian random vectors in $\R^d$ with variances $W_1,\ldots, W_m$ respectively, and assume each $W_i \succeq \frac{1}{2} I_d$. Then with probability at least $1-2\exp(-C_1 t^2)$, we have
\[
\left\| \frac{1}{m} \sum_iA_iA_i' - W_i\right\|_\op \leq \max(\delta, \delta^2), \quad \text{ where } \delta = C_2 \sqrt{\frac{d}{m}} + \frac{t}{\sqrt{m}},
\]
where the constants $C_1$ and $C_2$ depend on $\sigma$.
\end{lemma}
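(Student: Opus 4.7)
The plan is to follow the template of Theorem~5.39 of \citet{vershynin2010introduction} essentially verbatim, since the only new feature here is that the covariance matrices $W_i$ are allowed to differ across $i$. Independence (not identical distribution) is what is actually needed for the underlying concentration inequality (Bernstein), so the extension is mostly bookkeeping. The proof reduces a spectral-norm bound on an average of centered rank-one matrices to a scalar Bernstein bound along a finite $\epsilon$-net of the unit sphere; the form $\max(\delta,\delta^2)$ arises because Bernstein's inequality has two regimes (subgaussian for small deviations, subexponential for large deviations).

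First I would reduce the spectral norm to a supremum over a net. Let $M_m = \frac{1}{m}\sum_{i=1}^m (A_iA_i' - W_i)$, a symmetric random matrix. Choose a $\tfrac14$-net $\mathcal{N}$ of the unit sphere $S^{d-1}$ with $|\mathcal{N}|\leq 9^d$ (standard volumetric bound). By the well-known net inequality for symmetric matrices,
\begin{equation}
\|M_m\|_\op \;\leq\; 2\,\max_{x\in\mathcal{N}} |x'M_m x|.
\end{equation}
Thus it suffices to bound $x'M_m x = \frac{1}{m}\sum_i (Z_i^2 - x'W_i x)$ for a fixed $x\in S^{d-1}$ with $Z_i = \langle x, A_i\rangle$, and then take a union bound over $\mathcal{N}$.

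Next I would bound the scalar sum via Bernstein. For a fixed unit $x$, the $Z_i$ are independent, mean zero, and $\sigma$-subgaussian (by definition of subgaussianity of the vector $A_i$), so $Z_i^2 - \E Z_i^2$ is centered and subexponential with $\psi_1$-norm at most $K := C\sigma^2$. Bernstein's inequality for independent centered subexponential random variables then yields, for any $u>0$,
\begin{equation}
\Pr\!\left( |x'M_m x| \geq u \right) \;\leq\; 2\exp\!\left(-c\,m\,\min\!\left(\frac{u^2}{K^2},\,\frac{u}{K}\right)\right),
\end{equation}
with absolute constants $c>0$ and $K$ depending only on $\sigma$. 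Crucially, the variances $\operatorname{Var}(Z_i)=x'W_i x$ need not be equal; Bernstein only needs independence and a uniform bound on the subexponential norm, which we have.

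Finally I would combine with a union bound and tune constants. Setting $\delta = C_2\sqrt{d/m} + t/\sqrt{m}$ and $u = \tfrac12 \max(\delta,\delta^2)$, the two cases of Bernstein correspond exactly to the two terms in the maximum: when $\delta\leq 1$, $u \asymp \delta$ and $\min(u^2/K^2, u/K) = u^2/K^2 \gtrsim d/m + t^2/m$, so $cmu^2/K^2 \gtrsim d + t^2$; when $\delta > 1$, $u \asymp \delta^2$ and $\min = u/K \gtrsim d/m + t^2/m$, so $cmu/K \gtrsim d + t^2$. In both regimes the Bernstein exponent majorizes $(\log 9)\,d + C_1 t^2$ once $C_2$ is chosen large enough, so the union bound $|\mathcal{N}|\cdot 2\exp(-\cdots) \leq 2\exp(-C_1 t^2)$ closes the argument after folding the factor of $2$ from the net inequality into the choice of $u$.

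The only real obstacle is the clean matching of the two regimes in Bernstein to the $\max(\delta,\delta^2)$ form and bookkeeping of the $\sigma$-dependent constants. The hypothesis $W_i \succeq \tfrac12 I_d$ is used only to ensure that $\sigma$ is bounded below by an absolute constant (so that $K=C\sigma^2$ and the constants $C_1,C_2$ can be absorbed into functions of $\sigma$ alone), rather than depending on the sequence of $W_i$'s; apart from this normalization, nothing in the proof relies on it.
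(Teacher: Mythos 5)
Your proof follows essentially the same path as the paper's: a $\tfrac14$-net of cardinality $9^d$, the reduction of $\|M_m\|_\op$ to $2\max_{x\in\mathcal{N}}|x'M_m x|$, Bernstein's inequality for the centered sub-exponential $Z_i^2 - \E Z_i^2$ (using only independence, not identical distribution), and a union bound tuned so the $9^d$ factor is absorbed. Your reading of the role of $W_i \succeq \tfrac12 I_d$ also matches the paper, which uses it to derive $\sigma \geq \|Z_i\|_{\psi_2} \geq \sqrt{\beta' W_i \beta / 2} \geq \tfrac12$ so that the $\sigma$-dependent constants are uniformly controlled.
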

The proof of Lemma~\ref{lem:rmt} is a straightforward modification of the argument in~\citet{vershynin2010introduction}. We include it for self-containment, after the proof of Theorem~\ref{thm:consistencyT}.

\begin{proof}[Proof (Theorem~\ref{thm:consistencyT}).]
  Because our method is invariant to invertible affine transformations of $S = (S_1,\ldots,S_d)$, we can assume without loss of generality that $W=I$ and $\mu_0 = 0$. If the user selects basis $\widetilde S$ with mean $\tilde \mu_0 \neq 0$ and invertible variance $\widetilde W \neq I_d$, we simply analyze the method using the affine transform $S = \widetilde W^{-1/2} (\widetilde S - \tilde \mu_0)$, which leads to the same estimator $\hT$. If $\widetilde W$ is not invertible we can first discard redundant basis functions and then transform to a lower-dimensional $S$. 

Note that if $\|\widetilde S(x)\|_\infty$ (which is chosen by the analyst) is bounded then $\|S(x)\|_{\infty}$ is bounded as well, by some constant $B_d$ that may grow with $d$; therefore $|\eta'S(x)|<\sqrt{d}B_d$ and $S(X_{i,j}) - \mu_i$ is sub-gaussian with variance proxy less than $\sigma_d = d B_d^2$.

  In that case 
  \begin{equation}
  A = \frac{\tau^2}{m-1} cc' + I_d, \quad f(\beta) = \frac{\beta'A\beta}{\beta'\beta} = \frac{\tau^2}{m-1} \left(\frac{\beta'c}{\|\beta\|}\right)^2 + 1, \quad \text{ and } \quad 
  f(\beta^*) = \frac{\tau^2\xi_d^2}{m-1} + 1,
  \end{equation}
  where $\|c\| = \xi_d^2$, the optimal correlation using the $d$-basis.

  Define the perturbations $Q=\hA-A$ and $P=\hW-I_d$. The result follows if we can prove that, for any fixed dimension $d$,
  \begin{align}
    \label{eq:Aconv} \|Q\|_\op = \|\hA - A\|_\op &= o_p\left(\frac{\tau^2}{m-1}\right) \\
    \label{eq:Wconv} \|P\|_\op = \|\hW - I_d\|_\op &= o_p(1) \quad\left[ = o_p\left(\frac{\tau^2}{m-1}\right)\right]  \end{align}
  
  To see why, note that, for $\|P\|_\op \leq 1/2$, we have
  \begin{align}
    \sup_\beta \hat f(\beta) - f(\beta)
    &= \sup_\beta \frac{\beta'A\beta + \beta'Q\beta}{\beta'\beta + \beta'P\beta} - \frac{\beta'A\beta}{\beta'\beta}\\
    &\leq \sup_\beta (1+2\|P\|_\op) \frac{\beta'A\beta + \beta'Q\beta}{\beta'\beta} - \frac{\beta'A\beta}{\beta'\beta}\\
    &\leq 2\|P\|_\op f(\beta^*) + (1+2\|P\|_\op)\|Q\|_\op\\
    &= o_p\left(\frac{\tau^2}{m-1}\right),
  \end{align}
  if \eqref{eq:Aconv} and \eqref{eq:Wconv} hold. There is a comparable bound for the other direction, leading to $\sup_\beta |\hat f(\beta) - f(\beta)| = o_p(\tau^2/(m-1))$. But because $f(\hat\beta) \geq f(\beta^*) - 2\sup|\hat f(\beta) - f(\beta)|$, we have
  \[
    \frac{\tau^2\rho_{\hat\beta}^2}{m-1} + 1 \geq \frac{\tau^2\rho_{\beta^*}^2}{m-1} + 1 + o_p\left(\frac{\tau^2}{m-1}\right)\quad \Rightarrow \quad \rho_{\hat\beta}^2 \geq \rho_{\beta^*}^2 + o_p(1).
  \]
  Because $\rho_{\beta^*}^2\to 1$ as $d\to\infty$, we have the desired consistency result as long as $d\to\infty$ slowly enough.

  Throughout the proof we will let
  \[
  \ep_{i,j} = S(X_{i,j})-\mu_i, \quad \bar\ep_i = \bar S_i - \mu_i = \frac{1}{n_i}\sum_i \ep_{i,j}, 
  \quad \text{ and }\quad \bar \ep = \frac{1}{n_+} \sum_i n_i \bar\ep_i = \frac{1}{n_+} \sum_{i,j} \ep_{i,j}.
  \]
  Recall the $\ep_{i,j}$ are all mutually independent with sub-gaussian norms $\|\ep_{i,j}\|_{\psi_2} \leq \sigma$; as a result $\|\bar\ep_i\|_{\psi_2} \leq \sigma/n_i$ and $\|\bar\ep\|_{\psi_2} = \sigma/n_+$.

  \paragraph{Step 1. Bound $\|\hA - A\|_\op$.}

  We begin by writing $\bar\mu = \frac{1}{n_+} \sum_i n_i\mu_i = \E \bar S$. Then recalling that $\bar S_i - \bar S = (\mu_i -\bar \mu) + (\bar\ep_i - \bar \ep)$, we can decompose
  \begin{align}
    (\bar S_i - \bar S)(\bar S_i - \bar S)'
    & = (\mu_i -\bar \mu)(\mu_i - \bar\mu)' + (\bar\ep_i - \bar \ep)(\bar\ep_i-\bar\ep)' \\
    &\quad+ \left[(\bar\ep_i - \bar \ep)(\mu_i -\bar \mu)' + (\mu_i -\bar \mu)(\bar\ep_i - \bar \ep)' \right].
  \end{align}

  We can decompose the error in $\hA$ as
  \begin{align}
    Q = \hA-A = 
    &\left[\frac{1}{m-1}\sum_i n_i (\bar S_i - \bar S)(\bar S_i - \bar S)'\right] - \left[\frac{\sum_in_i(\theta_i - \overline{\theta)}^2}{m-1}cc' + I_d\right]\\
    =& \frac{1}{m-1} \sum_i n_i \left[(\mu_i-\bar\mu)(\mu_i-\bar\mu)' - (\theta_i-\bar\theta)^2 cc'\right]\\
    & + \left[\frac{1}{m-1} \sum_i n_i \Var(\bar S_i - \bar S)\right] - I_d \\ 
    & + \frac{1}{m-1} \sum_i n_i \left[(\bar\ep_i - \bar \ep)(\bar\ep_i - \bar \ep)' - \Var(\bar S_i-\bar S)\right]\\
    & + \frac{1}{m-1} \sum_i n_i \left[(\bar\ep_i - \bar \ep)(\mu_i - \bar \mu)' + (\mu_i - \bar \mu)(\bar\ep_i - \bar \ep)'\right]
  \end{align}
  where $\bar \mu = \frac{1}{n_+} \sum_i n_i\mu_i$. Write $Q=Q_1+Q_2+Q_3+Q_4$, for the four sums above. Then $Q_1$ is a deterministic error from the approximation $\mu_i\approx \mu_0+\theta_i c = \theta_i c$, $Q_2$ is a deterministic error from the approximation $W_i\approx I_d$, and $Q_3$ and $Q_4$ are estimation errors.
  
  \paragraph{$Q_1$: Approximation Error for Means}
  
  Recalling Lemma~\ref{lem:qmd}, write $\mu_i = \theta_i c + \theta_i\zeta(\theta_i)$, where the remainder $\zeta(\theta_i)$ satisfies 
  \[
  \zeta^*(\delta) = \sup\{\|\zeta(\theta)\|:\; |\theta|\leq \delta\} \to 0 \quad \text{ as } \;\;\delta\to 0.
  \]
  Furthermore because $|\bar\theta| \leq \max_j|\theta_j|$, we have $\|\mu_i - \bar\mu - (\theta_i-\bar\theta) c\| \leq 2\zeta^*(\max_j|\theta_j|)$.

  As a result we obtain
  \begin{align}
    \|Q_1\|_{\op} &= \frac{1}{m-1}\sum_i n_i \left\| (\mu_i - \bar \mu)(\mu_i - \bar\mu)' - (\theta_i-\bar\theta)^2 cc'\right\|_{\op}\\
    \label{eq:squareBound}
                  &\leq \frac{1}{m-1}\sum_i n_i \left(2|\theta_i-\bar\theta|  + \|\mu_i - \bar\mu - (\theta_i-\bar\theta) c\|\right) \|\mu_i - \bar\mu - (\theta_i -\bar\theta)c\|\\
                  &\leq \frac{1}{m-1}\sum_i n_i \left(2|\theta_i-\bar\theta|  + |\theta_i-\bar\theta|2\zeta^*(\max_j|\theta_j|)\right) |\theta_i-\bar\theta| 2\zeta^*(\max_j|\theta_j|)\\
                  &= \frac{\tau^2}{m-1} 4\left[\zeta^*(\max_j|\theta_j|) + \zeta^*(\max_j|\theta_j|)^2\right]\\
    \label{eq:maxToTau}
                  &= o\left(\frac{\tau^2}{m-1}\right)
  \end{align}
  
  In \eqref{eq:squareBound} we use the fact that for $a,b\in\R^d$,
  \begin{align}
    \|aa'-bb'\|_{\op} 
    &= \frac{1}{2}\|(a+b)(a-b)' + (a-b)(a+b)'\|_{\op}\\
    &\leq (\|a\| + \|b\|) \|a - b\|\\
    &\leq \left(2\|b\| + \|a - b\|\right) \|a-b\|,
  \end{align}
  and that $\|c\| \leq 1$.
  
  \paragraph{$Q_2$: Approximation Error for Variances}
  To compute the variance of $\bar S_i - \bar S$, we note that
  \begin{equation}\label{eq:varDecomp}
    \sum_i n_i (\bar\ep_i-\bar \ep)(\bar\ep_i - \bar\ep)'
  = \sum_i n_i \bar\ep_i \bar\ep_i' - n_+ \bar \ep \bar \ep',
  \end{equation}
  leading to
  \begin{align}
    \sum_i n_i \Var(\bar S_i-\bar S)
    &= \sum_i n_i \Var(\bar S_i) - n_+ \Var(\bar S)\\
    &= \sum_i W_i - \frac{1}{n_+} \sum_i n_i W_i\\
    &= \sum_i\left(1-\frac{n_i}{n_+}\right) W_i.
  \end{align}
  Because we can also write $I_d = \frac{1}{m-1}\sum_i\left(1-\frac{n_i}{n_+}\right) I_d$, and noting that by Lemma~\ref{lem:qmd} we have $W_i = I_d + \theta_i D + o(|\theta_i|)$:
  \begin{align}
    Q_2 &= \left[\frac{1}{m-1}\sum_i\left(1-\frac{n_i}{n_+}\right)W_i\right] - I_d\\
        &= \frac{1}{m-1} \sum_i \left(1-\frac{n_i}{n_+}\right)(W_i-I_d) + I_d / m
  \end{align}

  Hence,
  \begin{align}
    \|Q_2\|_\op 
    &\leq \left|\frac{1}{m-1} \sum_i \left(1-\frac{n_i}{n_+}\right)(\theta_i D + o(|\theta_i|))\right|_\op + \|I_d / m\|_\op\\
    &= O(\max_i |\theta_i|) + 1/m\\
    &= o(1) = o\left( \frac{\tau^2}{m-1}\right),
  \end{align}
  
  \paragraph{$Q_3$: Estimation Error for Variances}
  Applying \eqref{eq:varDecomp} again, we can decompose $Q_3$ into two terms.
  \begin{align}
    Q_3 
    &= \frac{1}{m-1} \sum_i n_i \left[(\bar\ep_i- \bar\ep)(\bar\ep_i - \bar\ep)' - \Var(\bar S_i - \bar S)\right]\\
    &= \frac{1}{m-1} \sum_i (n_i \bar\ep_i\bar\ep_i' - W_i) \;\; - \;\; \frac{1}{m-1}\left[n_+\bar\ep \bar\ep' - \frac{1}{n_+}\sum_i n_iW_i\right].
  \end{align}
  Because $n_+\|\bar\ep\|^2 = \mathcal{O}_p(1)$, the operator norm of the second term is $\mathcal{O}_p(1/m)$. By Lemma~\ref{lem:rmt}, we have with probability at least $1-2\exp\{-C_1d\}$, \WFcomment{need to address dependence on $\sigma_d$}
  \[
  \left\|\frac{1}{m} \sum_i n_i \bar\ep_i\bar\ep_i' - W_i\right\|_\op \leq \max(\delta, \delta^2), 
\quad\text{ with }\quad \delta = (C_2+1)\sqrt{\frac{d}{m}}
  \]

  \paragraph{$Q_4$: Estimation Error, Cross-Term}
  
  We decompose $Q_4$ as $Q_4 = H + H'$, where
  \begin{align}
    H &= \frac{1}{m-1} \sum_i n_i (\bar\ep_i - \bar \ep) (\mu_i - \bar\mu)'\\
    &= \left(\frac{1}{m-1} \sum_i n_i \theta_i \bar \ep_i\right) (\mu_i - \bar \mu)'
       - \frac{n_+}{m-1} \bar \ep (\mu_i - \bar \mu)'.
  \end{align}
  
  Since $\sup_i \|\mu_i - \bar \mu\|_2 = O(\max_j |\theta_j|)$, $\left\| \sum_i n_i \bar \ep_i \right\|_2 = O_p(d)$, and $\| n_+ \bar \ep \|_2 = O_p(1)$  by Chebyschev's,
  \begin{align}
     \|H + H'\|_\op 
    &= O_p\left(d \max_j \theta_j^2 \right) + O_p\left( \max_j |\theta_j| \right) = o_p\left(\frac{\tau^2}{m-1}\right)
  \end{align}

  \paragraph{Step 2: Bound $\|\hW - I \|$.}
  
  We next bound $\|P\|_\op$. We first decompose $\hW$ into two variance components
  \begin{align}
    \hW &= \frac{1}{n_+-m}\sum_{i,j} (S(X_{i,j})-\bar S_i)(S(X_{i,j})-\bar S_i)'\\
    &= \frac{1}{n_+-m}\sum_{i,j} (\ep_{i,j}-\bar \ep_i)(\ep_{i,j}-\bar\ep_i)'\\
    &= \frac{1}{n_+-m}\sum_{i,j} \ep_{i,j}\ep_{i,j}' \;\;-\;\; \frac{1}{n_+-m} \sum_i n_i\bar\ep_i\bar\ep_i'.
  \end{align}
  This leads to a similar decomposition for the expectation:
  \[
  \E\hW = \bar W = \frac{1}{n_+ - m} \sum_{i,j} W_i \;\;-\;\; \frac{1}{n_+-m} \sum_i W_i.
  \]
  Matching the terms and applying Lemma~\ref{lem:rmt}, we obtain that with probability at least $1-4\exp\{-C_1d\}$,
  \[
  \|\hW - \bar W\|_\op \leq \frac{n_+}{n_+-m}\max(\delta_1,\delta_1^2) + \frac{m}{n_+-m}\max(\delta_2,\delta_2^2),
  \]
  with
  \[
  \delta_1 = (C_2+1)\sqrt{\frac{d}{n_+}},
  \quad \text{ and } \quad
  \delta_2 = (C_2+1)\sqrt{\frac{d}{m}}.
  \]
  Furthermore,
  \begin{align}
    \bar W - I 
    &= \frac{1}{n_+-m} \sum_i (n_i-1)(W_i - I)\\
    &= \frac{1}{n_+-m} \sum_i (n_i-1)\theta_i D - o(|\theta_i|)\\
    &= \mathcal{O}(\max |\theta_i|) \to 0
  \end{align}

\end{proof}

\section{Proof of Lemma~\ref{lem:rmt}}

We now turn to the proof of Lemma~\ref{lem:rmt}.

\begin{proof}[Proof (Lemma~\ref{lem:rmt}).]
   Write $Q = \frac{1}{m} \sum_i n_ia_ia_i' - W_i$ for the matrix whose norm we are bounding and $u = \max(\delta,\delta^2)$. Also, let $\|\cdot\|_{\psi_1}$ and $\|\cdot\|_{\psi_2}$ denote the sub-exponential and sub-gaussian norms respectively. 

   We repeat here the covering argument in~\citet{vershynin2010introduction}, but we must track the variance adjustment a bit more carefully in the concentration step. The proof proceeds in three parts. First, we discretize the unit sphere $S^{d-1}$ with a $\frac{1}{4}$-net $N$, which has cardinality $|N|\leq 9^d$. Second, we show concentration for each fixed vector $\beta\in S^{d-1}$. Third, we apply a union bound.
   
   \paragraph{Step 1. Approximation.} Using Lemma~5.4 of~\citet{vershynin2010introduction}, we can write
   \[
   \left\| Q \right\|_\op
   \; \leq \; 
   2 \max_{\beta\in N} \left| \beta'Q\beta\right|
   \; = \; 
   2 \max_{\beta\in N} \left| \frac{1}{m} \sum_i (a_i'\beta)^2 - \beta'W_i\beta \right|
   \]
   
   \paragraph{Step 2. Concentration.} Now fix any $\beta\in S^{d-1}$, and let $Z_i =  a_i'\beta$. The $Z_i$ are independent sub-gaussian random variables with $\E Z_i^2 = \beta'W_i\beta$ and $\|Z_i\|_{\psi_2} \leq \sigma$ ; hence, $V_i = Z_i^2 - \beta'W_i\beta$ are independent centered sub-exponential random variables with $\|V_i\|_{\psi_1} \leq 4\sigma^2$ and
   \[
   \sigma 
   \;\geq\;
   \|Z_i\|_{\psi_2}
   \;\geq\;
   \sqrt{\frac{\beta'W_i\beta}{2}}
   \;\geq\;
   \frac{1}{2}.
   \]

   Thus, applying a concentration inequality for sums of sub-exponential random variables, we obtain
   \[
   \P\left[ \left| \frac{1}{m} \sum_i V_i \right| \geq \frac{u}{2}\right] 
   \leq 2\exp\left[ - \frac{c_1}{\sigma^4} (C^2d + t^2)\right],
   \]
   for some constant $c_1$.
   
   \paragraph{Step 3. Union Bound.} Finally, we union bound over $\beta\in N$ to obtain
   \[
   \P\left[\max_{\beta\in N}\left\| \beta'Q\beta\right\| \geq \frac{u}{2}\right]
   \;\leq\; 9^d \cdot 2\exp\left[-\frac{c_1}{\sigma^4} (C^2d + t^2)\right] 
   \;\leq\; 2\exp\left[ -c_1t^2/\sigma^4\right],
   \]
   for $C=C_\sigma = \sigma^2 \sqrt{\log(9)/c_1}$.
\end{proof}

\section{Proof of Theorem~\ref{thm:optimality}}

\begin{proof}[Proof (Theorem~\ref{thm:optimality}).]
  To reduce notational clutter we re-name $P_{m+1}$ and $P_{m+2}$ as $P_1$ and $P_2$, with $P_i$ coming from density $p_{\theta_i}$, $\theta_i = t_i/\sqrt{n_i}$. Let $n$ be an index for the sequence of problems, with sample sizes $n_1(n),n_2(n)\to \infty$. Let $P_0$ denote the distribution represented by density $p_0$. Finally, assume without loss of generality that the Fisher information is 1 at $\theta=0$ (otherwise we could re-parameterize the family by scaling $\theta$).

  We will study the behavior of $\Delta_{\hT}$ for a fixed pair $t_1,t_2$ by relating it to the null case where $t_1=t_2=0$. Note that contiguity is satisfied in this problem: $P_1^{n_1}$ is contiguous to $P_0^{n_1}$ and $P_2^{n_2}$ is contiguous to $P_0^{n_2}$ because both are product measures under local alternatives at $\theta=0$; hence $P_1^{n_1} \times P_2^{n_2}$ is contiguous to $P_0^{n_1+n_2}$.

  Since the family is DQM at $\theta=0$ in the interior of $\Omega$, it follows that $\Delta_T \to \cN(0, 1)$ under the null. Furthermore, the test that rejects when $\Delta_T > \Phi^{-1}(1-\alpha)$, where $\Phi$ is the c.d.f. of a standard normal, is an asymptotically efficient test at level $\alpha$ by Theorem 15.14 and the obvious two-sample generalization to 15.5 in \cite{van2000asymptotic}. 
	
  The DQM condition implies local asymptotic normality for the sequence of experiments. Since $\corr_0(\hT_n(X), T(X)| \hT_n) \toProb 1$, it follows that $\Delta_{\hT_n} - \Delta_T \to 0$ in probability under the null as $n \to \infty$, and therefore also under the alternative by contiguity. As a result the test that rejects for $\Delta_{\hT_n} > \Phi^{-1}(1-\alpha)$ has the same asymptotic efficiency as $\Delta_T$.
\end{proof}

\end{document}